\newcommand{\slength}[1]{|#1|}
\newcommand{\length}{\ensuremath{\mathit{len}}}
\newcounter{instr}
\newcommand{\ep}{end\text{-}pos}
\newcommand{\suf}{s\ell}  
\newcommand{\isuf}{s\ell^*}  
\newcommand{\aut}{\mathcal{A}}
\newcommand{\R}{\mathcal{R}\hspace{-1.2pt}_{_{x}}}
\newcommand{\notR}{\not\hspace{-3pt}\mathcal{R}\hspace{-2pt}_{_{x}}}
\newcommand{\bigO}{\mathcal{O}}
\newcommand{\Suff}{\mathit{Suff}\xspace}
\newcommand{\Fact}{\mathit{Fact}\xspace}
\newcommand{\myroot}{\mathit{root}\xspace}
\newcommand{\nil}{\textsf{null}\xspace}
\newcommand{\defAs}
{=}
\newcommand{\trdsuff}
{\stackrel{\textit{\tiny utd}}{\sqsupseteq}}
\newcommand{\ttrdsuff}
{\stackrel{\textit{\tiny $t$-utd}}{\sqsupseteq}}
\newcommand{\trdmatch}
{\stackrel{\textit{\tiny utd}}{=}}
\newcommand{\val}{\mathit{val}\xspace}
\newcommand{\Prob}{\mathit{Pr}\xspace}
\newlength{\gnat}
\newlength{\gnatb}
\newcommand{\X}{X}
\newcommand{\Y}{Y}
\newcommand{\Z}{Z}
\newcommand{\expect}{E}
\newcommand{\md}{\mathit{\mu}\xspace}
\begin{document}

\title{Text Searching Allowing for Non-Overlapping Adjacent Unbalanced Translocations\thanks{Preliminary versions of the results presented in this article have been presented in two conference papers \cite{FP19,CFP20}}}

\author{Domenico Cantone$^{\dagger}$, Simone Faro$^{\dagger}$ and Arianna Pavone$^{\ddagger}$}

\institute{
$^{\dagger}$Department of Mathematics and Computer Science \\
University of Catania, Viale A. Doria n.6, 95125, Catania, Italy
\email{\{domenico.cantone,simone.faro\}@unict.it}\\[0.2cm]
$^{\ddagger}$Department of Cognitive Science\\University of Messina, via Concezione n.6/8, 98122, Messina, Italia\\
\email{apavone@unime.it}
}

\maketitle

\begin{abstract}
In this paper we investigate the \emph{approximate string matching problem} when the allowed edit operations are \emph{non-overlapping unbalanced translocations of adjacent factors}. 
Such kind of edit operations take place when two adjacent sub-strings of the text swap, resulting in a modified string. The two involved substrings are allowed to be of different lengths. 

Such large-scale modifications on strings have various applications. They are among the most frequent chromosomal alterations, accounted for 30\% of all losses of heterozygosity, a major genetic event causing inactivation of cancer suppressor genes. In addition, among other applications, they are frequent modifications accounted in musical or in natural language information retrieval. 
However, despite of their central role in so many fields of text processing, little attention has been devoted to the problem of matching strings allowing for this kind of edit operation.

In this paper we present three algorithms for solving the problem, all of them with a $\bigO(nm^3)$ worst-case and a $\bigO(m^2)$-space complexity, where $m$ and $n$ are the length of the pattern and of the text, respectively.
In particular, our first algorithm is based on the dynamic-programming approach. Our second solution improves the previous one by making use of the Directed Acyclic Word Graph of the pattern. 
Finally our third algorithm is based on an alignment procedure. 
We also show that under the assumptions of equiprobability and independence of characters, our second algorithm has a $\bigO(n\log^2_{\sigma} m)$ average time complexity, for an alphabet of size $\sigma \geq 4$.
\end{abstract}

\begin{keywords}
approximate string matching, unbalanced translocations, text processing, matching allowing for edit operations.
\end{keywords}



\section{Introduction}\label{sec:Introduction}

\emph{String alignment} and \emph{approximate string matching} are two fundamental problems in text processing. Given two input sequences $x$, of length $m$, and $y$, of length $n$, the \emph{string alignment} problem consists  in finding a set of edit operations able to transform $x$ in $y$, while the \emph{approximate string matching} problem consists in finding all approximate matches of $x$ in $y$.  The closeness of a match is measured in terms of the sum of the costs of the elementary edit operations necessary to convert the string into an exact match.

Most string matching methods are based on the \emph{Levenshtein distance}~\cite{Lev66}, commonly referred to just as \emph{edit distance}, or on the \emph{Damerau distance}~\cite{Dam64}. The edit operations in the case of the Levenshtein distance are \emph{insertions}, \emph{deletions}, and \emph{substitutions} of characters, whereas, in the case of the Damerau distance, \emph{swaps} of characters, i.e., transpositions of two adjacent characters, are also allowed (for an in-depth survey on approximate string matching, see~\cite{Nav01}).  Both distances assume that changes between strings occur locally, i.e., only a small portion of the string is involved in the mutation event.  However evidence shows that in many applications there are several circumstances where large scale changes are possible~\cite{CHK15,CH03,VAL06}. For instance, such mutations are crucial in \textsc{dna} since they often cause genetic diseases \cite{Lupski98,Oliver02}. 
For example, large pieces of \textsc{dna} can be moved from one location to another (\emph{translocations}) \cite{CHK15,OK08,War91,WHR15}, or replaced by their reversed complements (\emph{inversions}) \cite{CCF13}. 

Translocations can be \emph{balanced} (when equal length pieces are swapped) or \emph{unbalanced} (when pieces with different lengths are moved).
Interestingly, unbalanced translocations are a relatively common type of mutation and a major contributor to neurodevelopmental disorders \cite{WHR15}.
In addition, cytogenetic studies have also indicated that unbalanced translocations can be found in human genome with a de novo frequency of 1 in 2000 \cite{War91} and that it is a frequent chromosome alteration in a variety of human cancers \cite{OK08}. Hence the need for practical and efficient methods for detecting and locating such kind of large scale mutations in biological sequences arises.

\subsection{Related Results}
In the last three decades much work has been made for the alignment and matching problem allowing for chromosomal alteration, especially for non-overlapping inversions. Table \ref{tab:related} shows the list of all solutions proposed over the years, together with their worst-case, average-case and space complexities.

\begin{table}[t!]
\begin{small}
\begin{center}
\begin{tabular}{llllll}
&   \textsc{Authors}~~~~ & \textsc{Year}~~~~ & \textsc{W.C. Time}~~~~ & \textsc{AVG Time}~~~~  &  \textsc{Space}\\
\hline
\multicolumn{6}{l}{\textbf{Alignment with inversions}}\\
&   Schoniger and Waterman \cite{SW92} & (1992) & $\bigO(n^2m^2)$ & - & $\bigO(m^2)$\\
&   Gao \emph{et al.} \cite{Gao03} & (2003) & $\bigO(n^2m^2)$ & - & $\bigO(nm)$\\
&   Chen \emph{et al.} \cite{CGLNWW04}  & (2004) & $\bigO(n^2m^2)$ & - & $\bigO(nm)$\\
&   Alves \emph{et al.} \cite{ADV05}  & (2005) & $\bigO(n^3\log n)$ & - & $\bigO(n^2)$\\
&   Vellozo \emph{et al.} \cite{VAL06} & (2006) & $\bigO(nm^2)$ & - & $\bigO(nm)$\\
\hline
\multicolumn{6}{l}{\textbf{Alignment with inversions and balanced translocations on both strings}}\\
&   Cho \emph{et al.} \cite{CHK15} & (2015) & $\bigO(m^3)$ & - & $\bigO(m^2)$\\
\hline
\multicolumn{6}{l}{\textbf{Pattern matching with inversions}}\\
&   Cantone \emph{et al.} \cite{CCF11} & (2011) & $\bigO(nm)$ & - & $\bigO(m^2)$\\
&   Cantone \emph{et al.} \cite{CCF13} & (2013) & $\bigO(nm)$ & $\bigO(n)$ & $\bigO(m^2)$\\
\hline
\multicolumn{6}{l}{\textbf{Pattern matching with unbalanced translocations}}\\
&   Faro and Pavone \cite{FP19} & (2019) & $\bigO(m^2)$ & $\bigO(n)$ & $\bigO(m)$\\
\hline
\multicolumn{6}{l}{\textbf{Pattern matching with inversions and balanced translocations}}\\
&   Cantone \emph{et al.} \cite{CFG10} & (2010) & $\bigO(nm^2)$ & $\bigO(n\log m)$ & $\bigO(m^2)$\\
&   Grabowski \emph{et al.} \cite{GFG11} & (2011) & $\bigO(nm^2)$ & $\bigO(n)$ & $\bigO(m)$\\
&   Cantone \emph{et al.} \cite{CFG14} & (2014) & $\bigO(nm^2)$ & $\bigO(n)$ & $\bigO(m)$\\
\hline
\multicolumn{6}{l}{\textbf{Pattern matching with unbalanced translocations}}\\
&   This paper (Algorithm \ref{fig:code1})& (2020) & $\bigO(nm^3)$ & - &   $\bigO(m^2)$\\
&   This paper (Algorithm \ref{fig:code2})& (2020) & $\bigO(nm^3)$ & $\bigO(n\log^2 m)$ &   $\bigO(m^2)$\\
&   This paper (Algorithm \ref{fig:code3})& (2020) & $\bigO(nm^3)$ & - & $\bigO(m^2)$\\
\hline
\end{tabular}
\end{center}
\end{small}
\caption{\label{tab:related}Results related to alignment and matching of strings allowing for inversions and translocations of factors. Unless otherwise specified, all the edit operations allowed by all the listed solutions are intended to involve only non-overlapping factors of the pattern.}
\end{table}

Concerning the alignment problem with inversions, a first solution based on dynamic programming, was proposed by Sch\"oniger and Waterman \cite{SW92}, which runs in $\bigO(n^2 m^2)$-time and $\bigO(n^2m^2)$-space on input sequences of length $n$ and $m$. Several other papers have been devoted to the alignment problem with inversions. The best solution is due to Vellozo \emph{et al.} \cite{VAL06}, who proposed a $\bigO(nm^2)$-time and $\bigO(nm)$-space algorithm, within the more general framework of an edit graph.

Regarding the alignment problem with translocations, Cho \emph{et al.} \cite{CHK15} presented a first solution for the case of inversions and translocations of equal length factors (i.e., balanced translocations), working in $\bigO(n^3)$-time and $\bigO(m^2)$-space. However their solution generalizes the problem to the case where edit operations can occur on both strings and assume that the input sequences have the same length, namely $|x|=|y|=n$.

Regarding the approximate string matching problem, a first solution was presented by Cantone \emph{et al.} \cite{CCF11}, where the authors presented an algorithm running in $\bigO(nm)$ worst-case time and $\bigO(m^{2})$-space for the approximate string matching problem allowing for non-overlapping inversions. Additionally, they also provided a variant \cite{CCF13} of the algorithm which has the same complexity in the worst case, but achieves $\bigO(n)$-time complexity on average.
Cantone \emph{et al.} also proposed in~\cite{CFG10} an efficient solution running in $\bigO(nm^2)$-time and $\bigO(m^2)$-space for a slightly more general problem, allowing for balanced translocations of adjacent factors besides non-overlapping inversions. The authors improved their previous result in~\cite{CFG14} obtaining an algorithm having $\bigO(n)$-time complexity on average.
We mention also the result by Grabowski \emph{et al.}~\cite{GFG11}, which solves the same string matching problem in $\bigO(nm^2)$-time and $\bigO(m)$-space, reaching in practical cases $\bigO(n)$-time complexity.


\subsection{Our Results}
While in the previous results mentioned above it is intended that a translocation may take place only between balanced factors of the pattern,
in this paper we investigate the approximate string matching problem under a string distance whose edit operations are non-overlapping unbalanced translocations of adjacent factors. To the best of our knowledge, this slightly more general problem has never been addressed in the context of approximate pattern matching.

Given a bound $\delta$, we consider the following variants of the approximate string matching problem allowing for non-overlapping unbalanced translocations of adjacent factors: 
\begin{enumerate}[(a)]
\item find the number of all unbounded approximate occurrences of $x$ in $y$; 
\item find the number of all $\delta$-bounded approximate occurrences of $x$ in $y$; 
\item find all positions $s$ in $y$ such that $x$ has a $\delta$-bounded approximate occurrence in $y$ at position $s$; 
\item for each position $s$ in $y$, find the number of distinct $\delta$-bounded approximate occurrences of $x$ in $y$ at position $s$.
\end{enumerate}

Specifically, after introducing some useful notions and basic definitions (Section \ref{sec:notions}), we will present the following three solutions for solving variants (a), (b) and (c):

\begin{itemize}
\item First we propose (Section \ref{sec:DP}) a solution to the problem, based on the general dynamic programming approach, which needs $\bigO(nm^3)$-time and $\bigO(m^2)$-space.
%
\item Subsequently we propose (Section \ref{sec:automaton}) a second solution to the problem  that makes use of the Directed Acyclic Word Graph of the pattern and achieves a $\bigO(n\log_{\sigma}^2 m)$-time complexity on average, for an alphabet of size $\sigma\geq 4$, still maintaining the same complexity, in the worst case, as for our first solution.
%
\item Finally we present (Section \ref{sec:heuristics}) an alternative searching algorithm for the same problem also working in $\bigO(nm^3)$ worst case time and $\bigO(m^2)$-space but based on an alignment approach. 
\end{itemize}

\section{Basic notions and definitions}\label{sec:notions}
A string $x$ of length $m \geq 0$, over an alphabet $\Sigma$, is represented as a finite array $x[1\,..\,m]$ of elements of $\Sigma$. We write $\slength{x} = m$ to indicate its length.  In particular, when $m=0$ we have the empty string $\varepsilon$.  We denote by $x[i]$ the $i$-th character of $x$, for $1\leq i\leq m$.  Likewise, the
factor of $x$, contained between the $i$-th and the $j$-th characters of $x$ is indicated with $x[i\,..\,j]$, for $1\leq i \leq j \leq m$.  The set of factors of $x$ is
denoted by $\Fact(x)$ and its size is $\bigO(m^2)$.  

A string $w\in \Sigma^*$ is a suffix of $x$ (in symbols, $w \sqsupseteq x$) if $w = x[i\,..\,m]$, for some $1\leq i \leq m$. We denote by $\Suff(x)$ the set of the suffixes of $x$.  Similarly, we say that $w$ is a prefix of $x$ if $w = x[1\,..\,i]$, for some $1\leq i \leq m$.  Additionally, we use the symbol $x_i$ to denote the prefix of $x$ of length $i$ (i.e.,$x_i = x[1\,..\,i]$), for $1\leq i\leq m$, and make the convention that $x_{0}$ denotes the empty string $\varepsilon$.  In addition, we write $xw$ for the concatenation of the strings $x$ and $w$.

For $w\in \Fact(p)$, we denote with $\ep(w)$ the set of all positions in $x$ at which an occurrence of $w$ ends; formally, we put $\ep(w) := \{i\ |\ w \sqsupseteq x_{i}\}$.  For any given pattern $x$, we define an equivalence relation $\R$ 
by putting, for all $w,z\in \Sigma^*$, 
$$w \mathrel{\R} z \iff \ep(w)=\ep(z).$$ 
We also denote with $\R(w)$ the equivalence class 
of the string $w$. For each equivalence class $q$ of $\R$, we put $\length(q) = |\val(q)|$, where $\val(q)$ is the longest string $w$ in the
equivalence class $q$. 

\begin{example}\label{ex1}
Let $x=agcagccag$ be a string over $\Sigma=\{a,g,c,t\}$ of length $m=9$.  Then we have $\ep(ag)=\{2,5,9\}$, since the substring $ag$ occurs three times in $x$, ending at positions $2$, $5$ and $9$, respectively, in that order. Similarly we have $\ep(gcc)=\{7\}$.
Observe that $\R(ag)=\{ag,g\}$.  Similarly we have
$\R(gc)=\{agc,gc\}$.  Thus, we have $\val(\R(ag))=ag$, $\length(\R(ag))=2$, $\val(\R(gc))=agc$ and $\length(\R(gc))=3$.
\end{example}

The Directed Acyclic Word Graph~\cite{CR94} of a pattern $x$ (\textsc{Dawg}, for short) is the deterministic automaton $\mathcal{A}(x)=(Q,\Sigma,\delta,\myroot,F)$ whose language is $\Fact(x)$, where $Q=\{\R(w) : w\in \Fact(x)\}$ is the set of states, $\Sigma$ is the alphabet of the characters in $x$, $\myroot=\R(\varepsilon)$ is the initial state, $F=Q$ is the set of final states, and $\delta:Q\times \Sigma\rightarrow Q$ is the transition function defined by $\delta(\R(y),c) \defAs \R(yc)$, for all $c\in \Sigma$ and
$yc\in \Fact(x)$.

We define a failure function, $\suf:\Fact(x) \setminus \{\varepsilon\} \rightarrow \Fact(x)$, called \emph{suffix link}, by putting, for any $w \in \Fact(x) \setminus \{\varepsilon\}$,
$$\suf(w) = \text{``\,longest } y \in \Suff(w) \text{ such that } y \notR w\text{"}.$$
The function $\suf$ enjoys the following property $$w \mathrel{\R} y \Longrightarrow \suf(w)=\suf(y).$$  We extend the functions $\suf$ and $\ep$ to $Q$ by putting $\suf(q) := \R(\suf(\val(q)))$ and $\ep(q) = \ep(\val(q))$, for each $q \in Q$. Figure \ref{fig:dawg} shows the \textsc{Dawg} of the pattern $x=aggga$, where the edges of the automaton are depicted in black while the suffix links are depicted in red.

\smallskip

A \emph{distance} $d:\Sigma^{*}\times \Sigma^{*}\rightarrow \mathbb{R}$ is a function which associates to any pair of strings $x$ and $y$ the minimal cost of any finite sequence of edit operations which transforms $x$ into $y$, if such a sequence exists, $\infty$ otherwise.  

In this paper we consider the \emph{unbalanced translocation distance}, $utd(x,y)$, whose unique edit operation is the \emph{translocation} of two adjacent factors of the string, with possibly different lengths.
Specifically, in an \emph{unbalanced translocation} a factor of the form $zw$ is transformed into $wz$, provided that both $\slength{z},\slength{w} > 0$ (it is not necessary that $\slength{z}=\slength{w}$). We assign a unit cost to each translocation.

\begin{example}\label{ex2}
Let $x=g\overline{t}\underline{ga}c\overline{cgt}\underline{ccag}$ and $y=g\underline{ga}\overline{t}c\underline{ccag}\overline{cgt}$ be given two strings of length $12$.  Then $utd(x,y)=2$ since $x$ can be transformed into $y$ by translocating the substrings $x[3..4]=ga$ and $x[2..2]=t$, and translocating the substrings $x[6..8]=cgt$ and $x[9..12]=ccag$.
\end{example}

When $utd(x,y) <\infty$, we say that $x$ and $y$ have $utd$-match. If $x$ has $utd$-match with a suffix of $y$, we write $x \trdsuff y$.


\section{A Dynamic Programming Solution}
\label{sec:DP}
In this section we present a general dynamic programming algorithm for the pattern matching problem with adjacent unbalanced translocations. Specifically we first describe an algorithm for solving Variant (c) of the problem. Subsequently we discuss how to slightly modify such algorithm for solving variant (a), (b) and (d).

Let $x$ be a pattern of length $m$ and $y$ a text of length $n$, with $n\geq m$, both over the same alphabet $\Sigma$ of size $\sigma$. In addition let $\delta$ be a bound for the number of translocations allowed in any approximate occurrence of $x$ in $y$, with $\delta\leq \lfloor m/2 \rfloor$.

Our algorithm is designed to iteratively compute, for $j = m,m+1,\ldots,n$, all the prefixes of $x$ which have a $\delta$-bounded $utd$-match with the suffix  of $y_{j}$, by exploiting information gathered at previous iterations.

Since the allowed edit operations involve substrings of the pattern $x$, it is useful to consider the set $\texttt{F}^k_j$ of all the positions in $x$ at which an occurrence of the suffix of $y_{j}$ of length $k$ ends.  More precisely, for $1\le k \leq m$ and $k-1\leq j \leq n$, we put 
$$ \texttt{F}^k_j := \{k-1\leq i\leq m\ |\ y[j-k+1\,..\,j] \sqsupseteq x_{i}\}\,.$$
Observe that 
\begin{equation}\label{obs1}
\texttt{F}^k_j \subseteq \texttt{F}^h_j, \textrm{ for } 1\leq h \leq k \leq m.
\end{equation}

\begin{example}\label{ex3}
Let $x=cattcatgatcat$ and $y=atcatgacttactgactta$ be a pattern and respectively
a text.  Then $\texttt{F}^3_5$ is the set of all positions in $x$ at which an occurrence of the suffix of $y_{5}$ of length $3$ ends,
namely, $cat$.  Thus $\texttt{F}^3_5 = \{3,7,13\}$.  Similarly, we have that $\texttt{F}^2_5 = \{3,7,10,13\}$.  Observe that $\texttt{F}^3_5 \subseteq \texttt{F}^2_5$.
\end{example}

The sets $\texttt{F}^k_j$ can be computed according to the following lemma.

\settowidth{\gnat}{$(k=1 \textrm{ or }i\in \mathcal{I}^{k-1}_{j-1}) \textrm{ and } p[i-k+1]=y[j]\,.$}
\settowidth{\gnatb}{$\sqcup$}
\setlength{\gnat}{-\gnat}
\addtolength{\gnat}{\textwidth}
\addtolength{\gnat}{-\gnatb}

\begin{lemma}\label{lem2}
Let $y$ and $x$ be a text of length $n$ and a pattern of length $m$, respectively.  Then $i \in \texttt{F}^k_{j}$ if and only if one of the following condition holds:
\begin{enumerate}[(a)]
\item $x[i]=y[j]$ and $k=1$;
\item $x[i]=y[j]$, $1<k<i$ and $(i-1)\in \texttt{F}^{k-1}_{j-1}$,
\end{enumerate}
for $1\leq k< i < m$ and $k-1\leq j\leq n$.
\end{lemma}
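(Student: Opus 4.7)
The plan is to unfold the definition of $\texttt{F}^k_j$: by construction, $i \in \texttt{F}^k_j$ exactly when $y[j-k+1\,..\,j] \sqsupseteq x_i$, which, since a suffix of $x_i$ of length $k$ exists precisely when $k\le i$, amounts to the character-by-character equality $x[i-k+1\,..\,i] = y[j-k+1\,..\,j]$. Once the condition is rephrased this way, the lemma becomes a straightforward case split on whether $k=1$ or $k>1$, and I would prove the two directions of the biconditional separately.

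For the $(\Rightarrow)$ direction, assume $i \in \texttt{F}^k_j$ and read off the equality $x[i-k+1\,..\,i] = y[j-k+1\,..\,j]$. Comparing the last characters immediately gives $x[i]=y[j]$. If $k=1$ this is all that is needed, yielding condition (a). If $k>1$, then stripping the last character from both sides leaves $x[i-k+1\,..\,i-1] = y[j-k+1\,..\,j-1]$, i.e. $y[(j-1)-(k-1)+1\,..\,j-1] \sqsupseteq x_{i-1}$, which by definition says $i-1 \in \texttt{F}^{k-1}_{j-1}$; combined with $x[i]=y[j]$ this is exactly condition (b). The hypothesis $k < i$ guarantees $i-1 \geq k-1$, so the asserted membership is meaningful.

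For the $(\Leftarrow)$ direction, case (a) is immediate, since $x[i]=y[j]$ with $k=1$ is just the statement that the single character $y[j]$ is a suffix of $x_i$. In case (b), expanding $i-1 \in \texttt{F}^{k-1}_{j-1}$ via the rephrasing above gives $x[i-k+1\,..\,i-1] = y[j-k+1\,..\,j-1]$, and concatenating with the single-character equality $x[i]=y[j]$ produces $x[i-k+1\,..\,i] = y[j-k+1\,..\,j]$, so $i \in \texttt{F}^k_j$. The argument is pure definition-chasing, with no real obstacle; the only point that requires care is matching the index bounds stated in the lemma ($1\leq k < i < m$ and $k-1\leq j \leq n$) with the ranges implicit in the set definitions of $\texttt{F}^{k-1}_{j-1}$ and $\texttt{F}^{k}_{j}$, so as to ensure every substring and every membership assertion is well-defined.
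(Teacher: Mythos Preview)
Your argument is correct and is exactly the natural one: the membership $i\in\texttt{F}^{k}_{j}$ unwinds to the character-by-character equality $x[i-k+1\,..\,i]=y[j-k+1\,..\,j]$, and the recursion then follows by peeling off the last character. The paper itself states Lemma~\ref{lem2} without proof, treating it as an elementary observation, so there is nothing to compare against beyond noting that your write-up is the obvious (and intended) verification.
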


\begin{figure}[t]
\begin{center}
\setlength{\unitlength}{0.008\textwidth}
\setlength{\fboxrule}{1mm}
\begin{picture}(84,27)

 \put(0,18){\makebox(30,3)[l]{$y$}}
 \put(3,18){  $\underbrace{\framebox(21,4)[c]{...}
     		                         \framebox(14,4)[c]{$s$}
		                         \hspace{-1pt}\overbrace{\framebox(16,4)[c]{$w$}}^{k}
		                         \hspace{-1pt}\overbrace{\framebox(10,4)[c]{$z$}}^{h}
		                        }_{j}
		                        \hspace{-1pt}\framebox(21,4)[c]{...}$ 
		}

 \put(22,4){\makebox(30,3)[l]{$x$}}
 \put(24,4){ $\underbrace{
 			\framebox(14,4)[c]{$u$}
                         \hspace{-1pt}\overbrace{\framebox(10,4)[c]{$z$}}^{h}
                         \hspace{-1pt}\overbrace{\framebox(16,4)[c]{$w$}}^{k}
                       }_{i}$
                       \hspace{-3pt}\framebox(8,4)[c]{}
                 }

\end{picture}
\end{center}
\caption{\label{fg1}Case (b) of Lemma \ref{lem1}. The prefix $u$ of the pattern, of length $i-h-k$, has a $utd$-match ending at position $j-h-k$ of the text, i.e. $(i-h-k)\in \texttt{P}_{j-k-h} \cup \{0\}$. In addition the substring of the pattern $z = x[i-h-k+1..i-k]$, of length $h$ has an exact match with the substring of the text $y[j-h+1..j]$, i.e. $(i-k)\in \texttt{F}^h_j$. Finally the substring of the pattern $w = x[i-k+1..i]$, of length $k$ has an exact match with the substring of the text $y[j-h-k+1..j-h]$, i.e. $i \in \texttt{F}^k_{j-h}$.}
\end{figure}
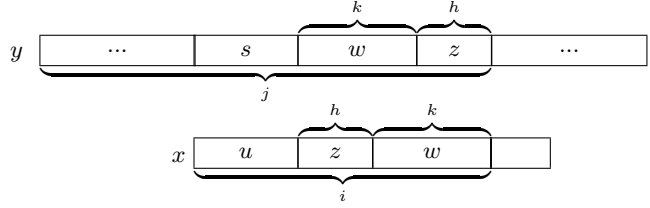

Observe that, based on equation (\ref{obs1}), we can represent the sets $\texttt{F}^{k}_{j}$ by means of a single matrix $\texttt{F}$ of size $m\times n$. Specifically, for $1\leq i\leq m$ and $1\leq j \leq n,$ we set $\texttt{F}[i,j]$ as the length of the longest suffix of $x_i$ which is also a suffix of $y_j$. More formally we have:
$$
	i \in \texttt{F}^{k}_{j} \iff \texttt{F}[i,j]\geq k.
$$

For solving the problem it is convenient to introduce the set $\texttt{Q}^i_j$, for $1\leq i\leq m$ and $1 \leq j \leq n$, defined by  
$$\texttt{Q}^i_j := \{t\ |\ t\leq\delta \textrm{ and } \ x_i \ttrdsuff y_j\}.$$
The size of the set $\texttt{Q}^i_j$ corresponds to the number of possible different alignments of the string   $x_i$ against the suffix of $y_j$ of length $i$, allowing for unbalanced translocations of adjacent factors. Observe that the size of the set $\texttt{Q}[i,j]$ is always less than $\delta$. The minimum number of translocations for transforming $x_i$ in the suffix of $y_j$ is then given by
$$
    \delta(x_i, y_[j-i+1 .. j]) = \min( \texttt{Q}^i_j \cup \{+\infty\} )
$$
where we agree upon the value $+\infty$ in the case of no possible alignment between the two strings.
As a consequence, the pattern $x$ has a $\delta$-bounded $utd$-match ending at position $j$ of the text $y$ if and only if the set $\texttt{Q}^m_j$ is not empty.

The sets $\texttt{Q}^i_j$ can then be computed by way of the recursive relations contained in the following elementary lemmas.

\begin{lemma}\label{lem1}
Let $y$ and $x$ be a text of length $n$ and a pattern of length $m$, respectively.  Then $t \in \texttt{Q}^i_j$, for $1\leq i \leq m$ and $i\leq j\leq n$,
if and only if one of the following two facts holds:
\begin{enumerate}[(i)]
    \item \label{first}
    $t\leq \delta$, $x[i]= y[j]$ and
    $t-1 \in \texttt{Q}^{i-1}_{j-1} \cup \{-1\}$;

    \item  \label{second}
    $t\leq \delta$, $(i-k)\in \texttt{F}^h_j$,  $i\in  \texttt{F}^k_{j-h}$, and $t-1 \in \texttt{Q}^{i - k-h}_{j-k-h} \cup \{-1\}$,
    for some $1 \leq k,h < i$ such that $h+k\leq i$;
\end{enumerate}
\end{lemma}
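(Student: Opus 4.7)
This is an ``iff'' characterization of membership in $\texttt{Q}^i_j$, so I would prove each direction separately and proceed by induction on $i$ (with $j\in\{i,\ldots,n\}$ and $t\in\{0,\ldots,\delta\}$). The key structural fact underlying the recurrence is the following: any witness of $t \in \texttt{Q}^i_j$ is a set of $t$ pairwise non-overlapping translocations of adjacent factors of $x_i$; looking at the rightmost position $i$ of $x$, either no translocation covers position $i$ (so $x[i]$ must match $y[j]$ exactly, yielding case (i)), or exactly one translocation does, and by non-overlap it must be the rightmost one and occupy a contiguous block $x[i-h-k+1..i]$ for some $h,k\geq 1$ with $h+k\leq i$ (yielding case (ii)). The two directions of the lemma are then driven by this clean dichotomy.

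For the $(\Leftarrow)$ direction, I would explicitly construct witnessing alignments. In case (i), a $(t-1+1)$-utd-witness of $x_{i-1}$ against a suffix of $y_{j-1}$ (obtained from the inductive hypothesis applied to $t-1\in \texttt{Q}^{i-1}_{j-1}$, or from the trivial empty alignment when $t-1=-1$) extends by appending the exact match $x[i]=y[j]$, without introducing any new translocation. In case (ii), the hypotheses $(i-k)\in \texttt{F}^h_j$ and $i\in \texttt{F}^k_{j-h}$ unfold, by definition of $\texttt{F}$, into the equalities
\[
x[i-h-k+1..i-k]=y[j-h+1..j], \qquad x[i-k+1..i]=y[j-h-k+1..j-h],
\]
so the translocation that swaps the two adjacent factors $x[i-h-k+1..i-k]$ (length $h$) and $x[i-k+1..i]$ (length $k$) transforms the rightmost $h+k$ characters of $x_i$ into the rightmost $h+k$ characters of $y_j$. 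Prepending the witness for $t-1\in \texttt{Q}^{i-h-k}_{j-h-k}$ (or the empty prefix alignment if $t-1=-1$) yields a $t$-utd-witness of $x_i$ against a suffix of $y_j$.

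For the $(\Rightarrow)$ direction, I would start from an arbitrary witness for $t\in \texttt{Q}^i_j$ and apply the dichotomy above. If position $i$ is untouched, then $x[i]=y[j]$ and the remaining $t$ translocations witness the corresponding $t$-utd-match between $x_{i-1}$ and a suffix of $y_{j-1}$, giving case (i). Otherwise, the rightmost translocation occupies $x[i-h-k+1..i]$ for some admissible $h,k$; its effect on $y$ produces the suffix $y[j-h-k+1..j]$ in the swapped order, which reads off as precisely $(i-k)\in \texttt{F}^h_j$ and $i\in \texttt{F}^k_{j-h}$, while the remaining $t-1$ translocations are disjoint from the rightmost block and therefore act entirely within $x_{i-h-k}$, witnessing $t-1\in\texttt{Q}^{i-h-k}_{j-h-k}$ (or $t-1=-1$ if no other translocations are needed). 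This lands in case (ii). The main obstacle, and the only nontrivial step, is the bookkeeping that justifies the ``rightmost translocation'' decomposition: I must verify that non-overlapping adjacent-factor translocations covering $x_i$ are totally ordered by position, so that the one touching $i$ is unique and all others act strictly to its left. Once this is in place, the remaining verification is index arithmetic and direct unfolding of the definitions of $\texttt{F}^k_j$ and $\texttt{Q}^i_j$.
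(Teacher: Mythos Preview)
The paper does not supply a proof of this lemma; it is introduced as one of the ``elementary lemmas'' underpinning the dynamic-programming recurrence and left unproved. Your structural plan---the dichotomy on whether the rightmost position $i$ of $x$ is covered by a translocation, together with the observation that non-overlap forces any such translocation to be the unique rightmost block---is precisely the natural argument and matches the intuition the paper sketches around Figure~\ref{fg1}.

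However, there is a genuine gap you are glossing over in case~(i). In your $(\Rightarrow)$ direction you correctly derive that if position $i$ is untouched then the \emph{same} $t$ translocations witness $t\in\texttt{Q}^{i-1}_{j-1}$; but the stated lemma asks for $t-1\in\texttt{Q}^{i-1}_{j-1}\cup\{-1\}$, and nothing in your argument bridges that off-by-one. Symmetrically, your $(\Leftarrow)$ direction for case~(i) takes a $(t-1)$-witness on $x_{i-1}$ and appends the exact match $x[i]=y[j]$ ``without introducing any new translocation,'' which yields a $(t-1)$-witness on $x_i$, not a $t$-witness; the phrase ``$(t-1+1)$-utd-witness'' is exactly where this slips past unexamined. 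Comparing with Algorithm~\ref{fig:code1}, line~\ref{algo1:set1}, which sets $\texttt{Q}[i,j]\gets\texttt{Q}[i-1,j-1]$ \emph{without} adding $1$, the intended condition in case~(i) is almost certainly $t\in\texttt{Q}^{i-1}_{j-1}\cup\{0\}$, and the printed ``$t-1$'' and ``$-1$'' appear to be copied over from case~(ii). You should flag this discrepancy explicitly and prove the corrected statement; as written, the ``only if'' direction of case~(i) fails under either the ``exactly $t$'' or the ``at most $t$'' reading of $\ttrdsuff$.
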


Notice that condition (\ref{second}) in Lemma~\ref{lem1} refers to a translocation of adjacent factors of length $k$ and $h$, respectively (see Figure \ref{fg1}).

For solving variant (c) of the problem the sets $\texttt{Q}^i_j$ can be maintained as a matrix $\texttt{Q}$ of integer values of size $n\times m$, where
$$
    \texttt{Q}[i,j] = \min( \texttt{Q}^i_j \cup \{+\infty\})
$$

\begin{algorithm}[ht] 
\caption{\label{fig:code1}Dymanic programming solution}
\begin{small}
\LinesNumbered
\SetNlSty{normal}{}{.}
\SetAlgoNoEnd
\SetAlgoVlined
\KwIn{A pattern $x$, a text $y$ and a bound $\delta$}
\KwOut{The number of approximate occurrences of $x$ in $y$}
\For{$j \gets 1$ \textbf{to} $n$}{ 
    \For{$i \gets 1$ \textbf{to} $n$}{
        $\texttt{Q}[i,j] \gets \infty$\;\nllabel{algo1:init0} 
    }
}
$\texttt{Q}[0,0] \gets 0$\;\nllabel{algo1:init1} 
\For{$j \gets 1$ \textbf{to} $n$}{ 
    $\texttt{Q}[0,j] \gets 0$\;\nllabel{algo1:init2}
    \For{$i \gets 1$ \textbf{to} $m$}{ 
        \If{$x[i]=y[j]$} { 
            $\texttt{Q}[i,j]\gets \texttt{Q}[i-1,j-1]$\;\nllabel{algo1:set1} 
            $\texttt{F}[i,j] \gets \texttt{F}[i-1,j-1]+1$\;
        }
        \For{$k \gets 1$ \textbf{to} $i-1$} { 
            \For{$h \gets 1$ \textbf{to} $i-k$} { 
                \If{$\texttt{F}[i-h,j]\geq k$ \textbf{and} $\texttt{F}[i,j-k]\geq h$} { 
                    $\texttt{Q}[i,j]\gets \min(\texttt{Q}[i,j],\texttt{Q}[i-h-k,j-h-k]+1)$\;\nllabel{algo1:set2}
                }
            }
        }
    }
    \If{\emph{\texttt{Q}}$[m,j] \leq \delta$} {
        \textbf{Output} $j-m$\nllabel{algo1:output}
    }
}

\end{small}
\end{algorithm}

Based on the recurrence relations in Lemmas~\ref{lem1} and~\ref{lem2}, a general dynamic programming algorithm can be readily constructed, characterized by an overall $\bigO(n m^3)$-time and $\bigO(m^3)$-space complexity.  

The overhead due to the computation and the maintenance of the sets $\texttt{F}$ and $\texttt{Q}$ turns out to be quite relevant.

However, since we need only the last $m$ columns of the matrices $\texttt{F}$ and $\texttt{Q}$ for computing the next column, we can maintain them by means of two matrices of size $m^2$.
The pseudocode of the resulting dynamic programming algorithm is shown in Algorithm \ref{fig:code1}. Due to the four for-loops at lines 5, 7, 11, and 12, respectively, it can straightforwardly be proved that Algorithm1 has a $\bigO(n m^3)$-time and $\bigO(m^2)$-space complexity.
Indeed, the matrices $\texttt{F}$ and $\texttt{Q}$ are filled by columns and therefore we need to store only the last $m$ columns at each iteration of the for-loop at line 5.


\subsection{Solving other variants of the problem}
Algorithm 1, as proposed in the previous section, solves variant (c) of the approximate string matching problem allowing for non overlapping unbalanced translocations.
In this section we briefly discuss how to slightly modify the algorithm to obtain a solution for variants (a), (b) and (d) respectively.

\smallskip 

We first start with \textbf{Variant (b)} which consists in finding the number of all $\delta$-bounded approximate occurrences of $x$ in $y$. In this case it is enough to modify Algorithm 1 in order to count the matching positions while they are given in line \ref{algo1:output}. Thus the solution maintains the same time and space complexity.

\smallskip 

\textbf{Variant (a)} consists in finding the number of all unbounded approximate occurrences of $x$ in $y$, without taking into account the approximate distances between the pattern and its occurrences in the text.  
To solve this variant of the problem it is enough to reduce the set $\texttt{Q}[i,j]$, for $1 \leq j \leq n$, to a boolean matrix where 
$\texttt{Q}[i,j] = True$ if and only if $x_i \trdsuff y_j$ and $\texttt{Q}[i,j] = False$ otherwise.
In this context, the pattern $x$ has an $utd$-match ending at position $j$ of the text $y$ if and only if $\texttt{Q}[m,j] = True$.

With reference to Algorithm \ref{fig:code1}, the inizialization of line \ref{algo1:init0} must be changed to set $\texttt{Q}[i,j] = False$, for $0\leq i \leq m$ and $0\leq j \leq n$. Similarly the initialization of line \ref{algo1:init1} and line \ref{algo1:init2} must be changed to set $\texttt{Q}[i,0] = True$, for $0\leq i \leq n$.
Finally the assignment of line \ref{algo1:set2} sets $\texttt{Q}[i,j] = \texttt{Q}[i-h-k,j-h-k]$, while the instruction of line \ref{algo1:output} is performed only when $\texttt{Q}[i,j]$ is equal to $True$.

\smallskip 

\textbf{Variant (d)} consists in finding, for each position $j$ such that $x \ttrdsuff y_j$, the number of distinct $\delta$-bounded approximate occurrences of $x$ in $y$ ending at position $j$. In this case the sets $\texttt{Q}[i,j]$, must be maintained as bit-vectors of length $\delta$, where the $t$-th bit of $\texttt{Q}[i,j]$ is set if and only if $t in \texttt{Q}[i,j]$, with $0\leq t < \delta$.

With reference to Algorithm \ref{fig:code1}, the inizialization of line \ref{algo1:init0} must be changed to set $\texttt{Q}[i,j] = 0^{\delta}$, for $0\leq i \leq m$ and $0\leq j \leq n$. Similarly the initialization of line \ref{algo1:init1} and line \ref{algo1:init2} must be changed to set $\texttt{Q}[i,0] = 10^{\delta-1}$, for $0\leq i \leq n$.
Finally the assignment of line \ref{algo1:set2} sets $\texttt{Q}[i,j] = \texttt{Q}[i-h-k,j-h-k]$. The instruction of line \ref{algo1:output} is performed only when $\texttt{Q}[i,j]\neq 0$ and returns in output all values $t$ such that the $t$-th bit of $\texttt{Q}[i,j]$ is set.

Observe that, is $\delta\leq w$, the whole algorithm requires the same time maintains the same time and space complexity of Algorithm \ref{fig:code1}, where $w$ is the size of a word in the target machine. Otherwise, when $\delta > w$ the worst case time complexity of the algorithm if $\bigO(nm^3\delta/w)$ while the space complexity is $\bigO(m^2\delta/w)$.

\section{An Automaton-Based Algorithm}\label{sec:automaton}
In this section we improve the algorithm described in Section 3 by means of an efficient method for computing the sets $\texttt{F}^{k}_{j}$, for $1 \leq j \leq n$ and $1 \leq k < j$. Such method makes use of the \textsc{Dawg} of the pattern $x$ and the function $\ep$. In addition we introduce an efficient method for maintaining information from the sets $\texttt{Q}_i^j$.

\smallskip

Let $\mathcal{A}(x)=(Q, \Sigma, \delta, \myroot, F)$ be the \textsc{Dawg} of $x$.  For each position $j$ in $y$, let $w$ be the longest factor of
$x$ which is a suffix of $y_j$ too; also, let $q_j$ be the state of $\mathcal{A}(x)$ such that $\R(w)=q_j$, and let $l_j$ be the length of $w$.  We call the pair $(q_j,l_j)$ a
\emph{$y$-configuration} of $\mathcal{A}(x)$.  

The idea is to compute the $y$-configuration $(q_j, l_j)$ of $\mathcal{A}(x)$, for each position $j$ of the text, while scanning the text $y$.  The set $\texttt{F}^{k}_{j}$ computed at previous iterations do not need to be maintained explicitly; rather, it is enough to maintain only $y$-configurations.  These are then used to compute efficiently the set $\texttt{F}^{k}_{j}$ only when needed.

\begin{example}\label{ex4}
Let $x=aggga$ be a pattern and $y=aggagcatgggactaga$ a text respectively. Let $\mathcal{A}(x)=(Q, \Sigma, \delta, \myroot, F)$ be the DAWG of $x$ as depicted in Figure \ref{fig:dawg}, where $\myroot=q_0$ is the initial state and $F=\{q_1,q_2,q_3,q_4,q_5,q_6,q_7\}$ is the set of final states.  Edges of the \textsc{Dawg} are depicted in black while suffix links are depicted in red.
Observe that, after scanning the suffix $y_5$ starting from state $q_0$ of $\mathcal{A}(x)$, we reach state $q_2$ of the automaton. Thus, the corresponding $y$-configuration is $(q_2,2)$. Similarly, after scanning the suffix $y_{11}$, we get the $y$-configuration $(q_4,3)$.
\end{example}


The longest factor of $x$ ending at position $j$ of $y$ is computed in the same way as in the Forward-Dawg-Matching algorithm for the exact pattern matching problem (the interested readers are referred to \cite{CR94} for further details).

Specifically, let $(q_{j-1},l_{j-1})$ be the $y$-configuration of $\mathcal{A}(x)$ at step $(j-1)$.
The new $y$-configuration $(q_j, l_j)$ is set to $(\delta(q,y[j]), length(q)+1)$, where $q$ is the first node in the suffix path $\langle q_{j-1}$, $\suf(q_{j-1})$, $\suf^{(2)}(q_{j-1}),\ldots\rangle$ of $q_{j-1}$, including $q_{j-1}$, having a transition on $y[j]$, if such a node exists; otherwise $(q_j, l_j)$ is set to $(\myroot,0)$.\footnote{We recall that $\suf^{(0)}(q) \defAs q$ and,  recursively, $\suf^{(h+1)}(q) \defAs \suf(\suf^{(h)}(q))$, for  $h\geq 0$. 
}

\smallskip

Before explaining how to compute the sets $\texttt{F}^{k}_{j}$, it is convenient to introduce the partial function $\phi:Q \times \mathbb{N} \rightarrow Q$, which, given a node
$q \in Q$ and a length $k \leq length(q)$, computes the state $\phi(q,k)$ whose corresponding set of factors contains the suffix of $\val(q)$ of length $k$.  
Roughly speaking, $\phi(q,k)$ is the first node $p$ in the suffix path of $q$ such that $length(\suf(p)) < k$.

In the preprocessing phase, the \textsc{Dawg} $\mathcal{A}(x)=(Q, \Sigma,\delta, \myroot, F)$, together with the associated $\ep$ function, is computed.  Since for a pattern $x$ of length $m$ we have $|Q| \leq 2m+1$ and $|\ep(q)| \leq m$, for each $q \in Q$, we need only $\bigO(m^2)$ extra space (see \cite{CR94}).

To compute the set $\texttt{F}^k_j$, for $1\le k\le l_j$, we take advantage of the relation
\begin{equation}
    \label{eq:fac}
    \texttt{F}^k_j=\ep(\phi(q_j,k)).
\end{equation}
Notice that, in particular, we have $\texttt{F}^{l_j}_j=\ep(q_j)$.

The time complexity of the computation of $\phi(q,k)$ can be bounded by the length of the suffix path of node $q$.  Specifically, since the sequence
$$\langle\length(\suf^{(0)}(q)),\length(\suf^{(1)}(q)), \ldots, 0\rangle$$
of the lengths of the nodes in the suffix path from $q$ is strictly decreasing, we can do at most $\length(q)$ iterations over the suffix link, obtaining a $\bigO(m)$-time complexity.

According to Lemma~\ref{lem1}, a translocation of two adjacent factors of length $k$ and $h$, respectively, at position $j$ of the text $y$ is possible only if two factors of $x$ of lengths at least $k$ and $h$, respectively, have been recognized at both positions $j$ and $j-h$, namely if $l_j\ge h$ and $l_{j-h}\ge k$ hold (see Figure \ref{fg1}).

Let $\langle h_1, h_2, \ldots, h_r \rangle$ be the increasing sequence of all the values $h$ such that $1\leq h \leq \min(l_{j},l_{j-h})$.  For each $1\leq i \leq r$, condition (\ref{second}) of Lemma~\ref{lem1} requires member queries on the sets $\texttt{F}^{h_i}_{j}$.

Observe that if we proceed for decreasing values of $h$, the sets $\texttt{F}^{h}_{j}$, for $1 \leq h \leq l_j$, can be computed in constant time.  Specifically, for $h=1, \ldots, l_j-1$ $\texttt{F}^h_j$ can be computed in constant time from $\texttt{F}^{h+1}_j$ with at most one iteration over the suffix link of the state $\phi(q_{j},h+1)$.

Subsequently, for each member query on the set $\texttt{F}^{h_i}_{j}$, condition (\ref{second}) of Lemma~\ref{lem1} requires also member queries on the sets $\texttt{F}^{k}_{j-{h_i}}$, for $1\leq k < h_i$. Let $\langle k_1, k_2, \ldots, k_s \rangle$ be the increasing sequence of all the values $k$ such that $1\leq k \leq \min(l_{j-h_i},l_{j-h_i-k})$.
Also in this case we can proceed for decreasing values of $k$, in order to compute the sets $\texttt{F}^{k}_{j-{h_i}}$ in constant time, for $1 \leq k \leq l_{j-h_i}$. 

\smallskip

In order to efficiently maintain the elements in the sets $\texttt{Q}_i^j$ we introduce the sets $\texttt{P}_j$, for $1\leq j \leq n$, as the set of all prefixes of the pattern $x$ which have an approximate occurrences ending at position $j$ of the text. More formally
$$
    \texttt{P}_j = \{1\leq i \leq m\ |\ \texttt{Q}_i^j]\leq \delta \}.
$$
Such sets can be maintained as linked lists in order to be able to scan in linear time, for each position $j$ of the text, the set of all position $i$ for which it holds that $\texttt{Q}_i^j\leq \delta$.

\smallskip

The resulting algorithm for the approximate string matching problem allowing for unbalanced translocations of adjacent factors is shown in Algorithm \ref{fig:code2}. In the next sections, we analyze its worst-case and average-case complexity.

\begin{figure}
\begin{center}
\includegraphics[width=0.6\textwidth]{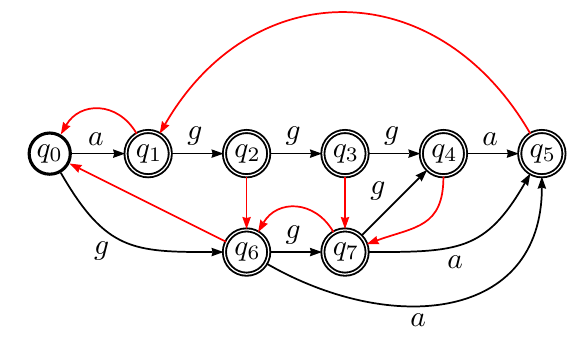}
\end{center}
\caption{\label{fig:dawg}The Directed Acyclic Word Graph (\textsc{Dawg}) of the pattern $x=aggga$, where the edges of the automaton are depicted in black while the suffix links are depicted in red.}
\end{figure}

\begin{algorithm}[ht] 
\caption{\label{fig:code2}The automaton based solution of variant (c)}
\begin{small}
\LinesNumbered
\SetNlSty{normal}{}{.}
\SetAlgoNoEnd
\KwIn{A pattern $x$ of length $m$, a text $y$ of length $n$ and a bound $\delta$}
\KwOut{The number of approximate occurrences of $x$ in $y$}
\For{$j \gets 1$ \textbf{to} $n$}{ 
    \For{$i \gets 1$ \textbf{to} $n$}{
        $\texttt{Q}[i,j] \gets \infty$\;\nllabel{algo2:init0} 
    }
}
$(q_0,l_0) \gets$ Dawg-Delta$(root_{\aut},0,y[0],\aut)$\;
$\texttt{Q}[0,0] \gets 0$\;
$\texttt{P}_0 \gets \{0\}$\;
\nllabel{l:main}\For{$j \gets 1$ \textbf{to} $n$} {
    $(q_j,l_j) \gets$ Dawg-Delta$(q_{j-1},l_{j-1},y[j],\aut)$\;
    $\texttt{P}_j \gets \{0\}$\;
    $\texttt{Q}[0,j] \gets 0$\;
    \nllabel{loop:2}\For{$i \in$ \emph{\texttt{P}}$_{j-1}$} {
        \If{$i<m$ \textbf{and} $x[i+1]=y[j]$} {
            $\texttt{Q}[i+1,j]\gets \texttt{Q}[i,j-1]$\;
        }
    }
    $u\gets q_j$\;
    \nllabel{loop:3}\For{$h\gets l_j$ \textbf{downto} $1$} {
        \If {$h = \length(\suf_{\aut}(u))$} {
            $u\gets \suf_{\aut}(u)$\;
        }
        $p\gets q_{j-h}$\;
        \nllabel{loop:4}\For{$k\gets l_{j-h}$ \textbf{downto} $1$} {
            \If {$k = \length(\suf_{\aut}(p))$} {
                $p\gets \suf_{\aut}(p)$\;
            }
            \nllabel{loop:5}\For{$i\in$ \emph{\texttt{P}}$_{j-h-k}$} {
                \nllabel{l:end}\If{$(i+h)\in \ep(u)$ \textbf{and} $(i+h+k)\in \ep(p)$} {
                    \If{\emph{\texttt{Q}}$[i,j] + 1 \leq \delta$} {
                        $\texttt{Q}[i+h+k,j] \gets \min(\texttt{Q}[i+h+k,j],\texttt{Q}[i,j] + 1)$\;
                        $\texttt{P}_j\gets \texttt{P}_j\cup \{(i+h+k)\}$\;
                    }
                }
            }
        }
    }
    \If {\emph{\texttt{Q}}$[m,j] \leq \delta$} {
        \textbf{Output} $j-m$\;
    }
}
\end{small}
\end{algorithm}

\begin{procedure}[h!] 
\caption{Dawg-Delta($q, l, \mathcal{A}$)}
\begin{small}
\LinesNumbered
\SetNlSty{normal}{}{.}
\SetAlgoNoEnd
\KwIn{The current configuration $(q,l)$ of the automaton $\mathcal{A}$ and a character $c$}
\KwOut{The new configuration of the automaton}
\uIf{$\delta_{\mathcal{B}}(q, c) = \textsc{nil}$} {
    \Repeat{$q = \textsc{nil}$ \textbf{or} $\delta_{\mathcal{A}}(q, c) \neq \textsc{nil}$} {
        $q\gets \isuf_{\mathcal{A}}(q)$\;
    }
    \uIf{$q = \textsc{nil}$} {
        $l\gets 0$\;
        $q\gets \myroot_{\mathcal{A}}$\;
    }
    \Else {
        $l\gets \length(q)+1$\;
    }
    $q\gets \delta_{\mathcal{A}}(q,c)$\;
}
\Else {
    $l\gets l+1$\;
    $q\gets \delta_{\mathcal{A}}(q, c)$\;
}
\Return $(q,l)$\;
\end{small}
\label{fig:dawgdelta}
\end{procedure}


\subsection{Worst-case time and space analysis}

In this section we determine the worst-case time and space analysis of the algorithm  presented in the previous section.  In particular, we will refer to the solution reported in Algorithm~\ref{fig:code2}.

First of all, observe that the main for-loop at line 7 is always executed $n$ times.  For each of its iterations, the cost of the execution of Procedure Dawg-Delta (line 8) for computing the new $y$-configuration requires at most $\bigO(m)$-time.  Since we have $|\texttt{P}_j|\leq m+1$, for all $1\leq j \leq n$, the for-loop at line 11 is also executed $\bigO(m)$-times. In addition, since we have $l_j\leq m$, for all $1\leq j \leq n$, the two nested for-loops at lines 15 and 19 are executed $m$ times. Observe also that the transitions of suffix links performed at lines 17 and 20 need only constant time. Thus, at each iteration of the main for-loop, the internal for-loop at line 19 takes at most $\bigO(m)$-time, while the for-loop at line 15 takes at most $\bigO(m^2)$-time.  In addition the for-loop at line 22 takes at most $\bigO(m)$-time, since $|\texttt{P}_{j-h-k}|\leq m+1$ and the tests at line 23 can be performed in constant time.
Summing up, Algorithm \ref{fig:code2} has a $\bigO(nm^3)$ worst-case time complexity.

In order to evaluate the space complexity of Algorithm \ref{fig:code2}, we observe that in the worst case, during the $j$-th iteration of its main for-loop, the sets $\texttt{F}^k_{j-k}$ and $\texttt{P}_{j-k}$, for $1\leq k \leq m$, must be kept in memory to handle translocations.
However, as explained before, we do not keep the values of
$\texttt{F}^k_{j-k}$ explicitly but rather we maintain only their corresponding $y$-configurations of the automaton $\mathcal{A}(x)$.
Thus, we need $\bigO(m)$-space for the last $m$ configurations of the automaton and $\bigO(m^2)$-space to keep the last $m+1$ values of the sets $\texttt{P}_{j-k}$, since the maximum cardinality of each such set is $m+1$.  Observe also that, although the size of the \textsc{Dawg} is linear in $m$, the \emph{end\textendash pos}$(\cdot)$ function can require $\bigO(m^2)$-space.  Therefore, the total space complexity of Algorithm \ref{fig:code2} is $\bigO(m^2)$.

\subsection{Average-case time analysis}
\label{av_case}

In this section we evaluate the average time complexity of our new automaton-based algorithm, assuming the uniform distribution and independence of characters in an alphabet $\Sigma$ with $\sigma\geq 4$ characters. 

In our analysis we do not include the time required for the computation of the DAWG and the \emph{end\textendash pos}$(\cdot)$ function, since they require $\bigO(m)$ and $\bigO(m^2)$ worst-case time, respectively, which turn out to be negligible if we assume that $m$ is much smaller than $n$. Hence we evaluate only the searching phase of the algorithm. 

\newcommand{\Xj}[1]{\X(#1)}
\newcommand{\Yj}[1]{\Y(#1)}
\newcommand{\Zj}[1]{\Z(#1)}

Given an alphabet
$\Sigma$ of size $\sigma \geq 4$, for $j=1,\ldots,n$, we consider
the following nonnegative random variables over the sample space of
the pairs of strings $x,y \in \Sigma^{*}$ of length $m$ and $n$,
respectively:
\begin{itemize}
\item $\Xj{j} \defAs $ the length $l_j\le m$ of the longest factor of $x$ which is a 	suffix of $y_{j}$;
\item $\Zj{j} \defAs |\texttt{P}_j|$, where we recall that $\texttt{P}_j = \{1\leq i\leq m\ |\ x_i \trdsuff y_j\}$.
\end{itemize}
Then the run-time of a call to Algorithm \ref{fig:code2} with parameters $(x,y)$ is proportional to 
\begin{equation} \label{sums}
\sum_{j=1}^{n}\left( \Zj{j-1} + \Xj{j} + \sum_{h=1}^{\Xj{j}} \left( \Xj{j-h} +  \sum_{k=1}^{\Xj{j-h}} \Zj{j-h-k} \right)\!\right),
\end{equation}
where the external summation refers to the main for-loop (at line 7), the second summation within it takes care of the internal for-loop at line 15, and the third summation refers to the inner for-loop at line 19.

Hence the average-case complexity of Algorithm \ref{fig:code2} is the expectation of (\ref{sums}), which, by linearity, is equal to

\begin{footnotesize}
\begin{equation}\label{sumsb}
\sum_{j=1}^{n}\!\!\left(\! \expect(\Zj{j-1}) + \expect( \Xj{j}) + \expect\left(\! \sum_{h=1}^{\Xj{j}} \Xj{j-h} \!\right)+ \expect\left(\! \sum_{h=1}^{\Xj{j}}\sum_{k=1}^{\Xj{j-h}} \Zj{j-h-k}\!\right)  \!\right) \!\!.
\end{equation}
\end{footnotesize}

where $\expect(\cdot)$ be the expectation function. Since $\expect(\Xj{j}) \leq \expect(\Xj{n-1})$ and $\expect(\Zj{j}) \leq \expect(\Zj{n-1})$, for $1 \leq j \leq n$,\footnote{In fact, for $j=m+1,\ldots,n$ all the inequalities hold as equalities.} by putting $\X  \defAs  \Xj{n-1}$ and $\Z  \defAs  \Zj{n-1}$, the expression (\ref{sumsb}) gets bounded from above by
\begin{equation}
    \label{sumsc}
\sum_{j=1}^{n}\!\!\left( \expect(\Z) + \expect( \X) + \expect  \left( \sum_{h=1}^{\X}  \X \right) + \expect  \left( \sum_{h=1}^{\X}\sum_{k=1}^{\X} \Z \right) \right) .
\end{equation}

Let $Z_{i}$ and $X_h$ be the indicator variables defined for $i=1,\ldots,m$ and $h=1,\ldots,m$, respectively as
$$
\begin{array}{ll}
Z_{i} \defAs \begin{cases}
1 & \text{if } i \in \texttt{P}_{n}\\
0 & \text{otherwise}
\end{cases}
& \textrm{ and~~ }
X_{h} \defAs \begin{cases}
1 & \text{if } X \geq h\\
0 & \text{otherwise}\,,
\end{cases}
\\
\end{array}
$$
Hence
$$
\begin{array}{l}
\displaystyle \Z = \sum_{i=1}^{m}\Z_{i},\ \expect(\Z_{i}^{2}) = \expect(\Z_{i}) = \Prob\{x_{i} \trdsuff y\},\\
\displaystyle \X = \sum_{h=1}^{m}\X_{h},\  \textrm{ and }\ \expect(\X_{h}^{2}) = \expect(\X_{h}) = \Prob\{X \geq h\}.\\
\end{array}
$$
So that we have
$$
\sum_{h=1}^{\X} \X = \X\X =
    \left(\sum_{h=1}^{m} \X_{h}\right) \cdot \left(\sum_{k=1}^{m}\X_{k}\right) =   \sum_{h=1}^{m}\sum_{k=1}^{m}\X_{h}\X_{k}\,.
$$
Therefore
\begin{equation*}
\sum_{h=1}^\X \sum_{k=1}^\X \Z = \X\X\Z = \left( \sum_{h=1}^m \sum_{k=1}^{m} \X_h \X_k \right) \cdot \Z,
\end{equation*}
which yields the following upper bound for (\ref{sumsc}):
\begin{equation}
    \label{sumsd}
\sum_{j=1}^{n}\!\!\left( \expect(\Z) + \expect( \X) + \left( \expect(\Z) + 1\right) \cdot \sum_{h=1}^m \sum_{k=1}^{m}\expect\!\left(\X_h \X_k\right)  \right).
\end{equation}

To estimate each of the terms $\expect(\X_h\X_k)$ in
(\ref{sumsd}), we use the well-known Cauchy-Schwarz inequality which
in the context of expectations assumes the form
    $|\expect(UV)| \le \sqrt{\expect(U^{2})\expect(V^{2})}$\,,
for any two random variables $U$ and $V$
such that $\expect(U^{2})$, $\expect(V^{2})$ and $\expect(UV)$ are
all finite.

Then, for $1 \leq h \leq m$ and $1\leq k \leq m$, we have
\begin{equation}
    \label{expProduct}
    \expect(X_hX_k) \le \sqrt{\expect(X_h^{2})\expect(X_k^{2})}
    =\sqrt{\expect(X_h)\expect(X_k)}\,.
\end{equation}

From (\ref{expProduct}), it then follows that (\ref{sumsd}) is bounded
from above by
\begin{small}
\begin{eqnarray}
    \label{bound}
     \nonumber\displaystyle\sum_{j=1}^{n}\left( \expect(\Z) + \expect(\X) + \left( \expect(\Z) + 1\right) \cdot \sum_{h=1}^m     \sum_{k=1}^{m}     \sqrt{\expect(X_h) \expect(X_k)} \right)\\
     \displaystyle \qquad= \sum_{j=1}^{n}\!\left( \expect(\Z)    +    \expect( \X) + \left( \expect(\Z) + 1\right) \cdot      \left(\sum_{h=1}^m\sqrt{\expect(X_h)}\right) \cdot    \left(\sum_{k=1}^{m}\sqrt{\expect(X_k)}\right) \right)\\
    \nonumber \displaystyle \qquad= \sum_{j=1}^{n}\!\left( \expect(\Z)    +    \expect( \X) + \left( \expect(\Z) + 1\right) \cdot      \left(\sum_{h=1}^m\sqrt{\expect(X_h)}\right)^2  \right).
\end{eqnarray}
\end{small}

To better understand (\ref{bound}), we evaluate the
expectations $\expect(\X)$ and $\expect(\Z)$ and the sum
$\sum_{h=1}^m\sqrt{\expect(X_h)}$.  To this purpose, it will be
useful to estimate also the expectations
    $\expect(X_h)=\Prob\{X\ge h\}$, for $1 \leq h \leq m$, and
    $\expect(X_k)=\Prob\{x_i \trdsuff y\}$, for $1\leq k \leq  m$.

Concerning $\expect(X_h)=\Prob\{X\ge h\}$, we reason as follows.
Since $y[n-h+1\,..\,n]$ ranges uniformly over a collection of
$\sigma^{h}$ strings and there can be at most $\min(\sigma^h, m-h+1)$
distinct factors of length $h$ in $x$, the probability $\Prob\{X\ge
h\}$ that one of them matches $y[n-h+1\,..\,n]$ is at most
$\min\left(1,\frac{m-h+1}{\sigma^h}\right)$. Hence, for  $h =
1,\ldots,m$, we have
\begin{equation}
    \label{star3}
\expect(X_h) \leq \min\left(1,\frac{m-h+1}{\sigma^h}\right).
\end{equation}

In view of (\ref{star3}),  we have:
\begin{equation}
    \label{star4}
\expect(\X)=\sum_{i=0}^m i\cdot\Prob\{\X = i\}=\sum_{i=1}^m
\Prob\{\X \ge i\} \leq \sum_{i=1}^m
\min\left(1,\frac{m-i+1}{\sigma^i}\right).
\end{equation}
Let $\overline{h}$ be the smallest integer $1\le h< m$ such that
$\frac{m-h+1}{\sigma^h} < 1$. Then, from (\ref{star4}), we have
\begin{align}
    \label{star4bis}
\expect(\X) &\leq \sum_{i=1}^{\overline{h}-1} 1 +
\sum_{i=\overline{h}}^m \frac{m-i+1}{\sigma^i} \leq \overline{h} - 1 +
(m -\overline{h} +1)\sum_{i=\overline{h}}^m \frac{1}{\sigma^i} \notag\\
& < \overline{h} - 1 + \frac{\sigma}{\sigma-1}\cdot\frac{m
-\overline{h} +1}{\sigma^{\overline{h}}} < \overline{h} - 1 +
\frac{\sigma}{\sigma-1} < \overline{h} + 1\,.
\end{align}
Since $\frac{m -(\overline{h}+1) +1}{\sigma^{\overline{h}+1}} \geq 1$,
then $\sigma^{\overline{h}+1} \leq m -(\overline{h}+1) +1 \leq m-1$,
so that
$\overline{h}+1 < \log_{\sigma} m$.
Hence from (\ref{star4bis}) and $\overline{h}+1 < \log_{\sigma} m$, we obtain
\begin{equation}
    \label{Xbound}
    \expect(\X) < \log_{\sigma} m\,.
\end{equation}
Likewise, from (\ref{star3}) and $\overline{h}+1 < \log_{\sigma} m$, we have
\begin{align}
\sum_{h=1}^m&\sqrt{\expect(X_h)}  \le
\sum_{h=1}^m\sqrt{\min\left(1, \frac{m-h+1}{\sigma^k}\right)} 
= \sum_{h=1}^{\overline{h}-1}1 + \sum_{h=\overline{h}}^m
\sqrt{\frac{m-h+1}{\sigma^h}} \notag\\
&~~~~~~\le \overline{h}-1 +
\sqrt{m-\overline{h}+1}\cdot\sum_{h=\overline{h}}^m\frac{1}{\sqrt{\sigma^h}}
< \overline{h}-1 +
\frac{\sqrt{\sigma}}{\sqrt{\sigma}-1}\cdot\sqrt{\frac{m-\overline{h}+1}{\sigma^{\overline{h}}}}
\notag\\
\label{Xbound2}
&~~~~~~< \overline{h}-1 + \frac{\sqrt{\sigma}}{\sqrt{\sigma}-1} \leq
\overline{h} + 1 < \log_{\sigma} m\,,
\end{align}
where $\overline{h}$ is defined as above.

Next we estimate $\expect(Z_i)=\Prob\{x_i \trdsuff y\}$, for $1\leq i \leq m$.
Let us denote by $\md(i)$ the number of distinct strings which have a
$utd$-match with a given string of length $i$ and whose characters are
pairwise distinct.  Then
$
\Prob\{x_{i} \trdsuff y\} \leq \md(i+1)/\sigma^{i+1}
$.
From the recursion
$$
\left\{
\begin{array}{rcl}
    \displaystyle\md(0) & = & 1\\
    \displaystyle\md(k+1) & = & \displaystyle\sum_{h=0}^{k} \md(h) + \sum_{h=1}^{\lfloor    \frac{k-1}{2}\rfloor} \md(k-2h-1) \qquad \text{(for  $k \geq 0$)}\,,
\end{array}
\right.
$$
it is not hard to see that $\md(i+1) \leq 3^{i}$, for
$i=1,2,\ldots,m$, so that we have
\begin{equation}
    \label{star6}
    \expect(\Z_{i}) = \Prob\{x_{i} \trdsuff y\} \leq    \frac{3^{i}}{\sigma^{i+1}}\,.
\end{equation}

Then, concerning $\expect(\Z)$, from (\ref{star6}) we have
\begin{equation}
    \label{Zbound}
\expect(\Z) = \expect\!\left(\sum_{i=1}^{m}\Z_{i}\right) =
\sum_{i=1}^{m}\expect(\Z_{i})
\leq \sum_{i=1}^{m}\frac{3^{i}}{\sigma^{i+1}}
< \frac{1}{\sigma} \cdot \frac{1}{1 - \frac{3}{\sigma}}
= \frac{1}{\sigma - 3} \leq 1
\end{equation}
(we recall that we have assumed $\sigma \geq 4$).

From (\ref{Zbound}), (\ref{Xbound}), and (\ref{Xbound2}), it then follows that (\ref{sumsb}) is bounded from above by $n \cdot (1+\log_{\sigma}m + 3 \log^2_{\sigma}m)$,
yielding a $\bigO(n\log^2_{\sigma}m)$ average-time complexity for our automaton based algorithm.



\section{An Alignment Based Solution} \label{sec:heuristics}
In this section we present a third solution for the approximate string alignment problem allowing for unbalanced translocations of adjacent factors. 
This solution is based on a procedure used for checking whenever an approximate match exists between two equal length  strings $x$ and $z$. 

The corresponding approximate string matching algorithm allowing for unbalanced translocations of adjacent factors can be trivially obtained by iterating the given procedure for all possible subsequences of the text of length $m$. 

The algorithm is composed by a preprocessing and searching phase, which we describe in Section \ref{sec:preprocessing} and in Section \ref{sec:searching}, respectively.
Then, in Section \ref{sec:complexity}, we prove the correctness of the algorithm and discuss its worst case time complexity.

\subsection{The Preprocessing Phase} \label{sec:preprocessing}
During the preprocessing phase of the algorithm three functions are computed, in the form of tables, which are then used during the alignment process.

We first define the \emph{next position function} $\mu_{x}:\Sigma \times \{1, \ldots, m\} \rightarrow \{2, \ldots, m\}$, associated to a given pattern $x$ of length $m$, as the function which returns the next position (to a given input position $i$) where a given character $c \in \Sigma$ occurs.
Specifically $\mu_{x}(c,i)$ is defined as the position $j>i$ in the pattern such that $x[j]=c$. If such a position does not exist then we set $\mu(c.i)=m+1$. More formally
$$
	\mu_x(c,i) := \min\Big(\{ j\ |\ 1\leq i < j \leq m \textrm{ and } x[j]=c\} \cup \{m+1\}\Big)
$$
The next position function $\mu_x$ can be precomputed and maintained in a table of size $m\times \sigma$ in $\bigO(m\sigma+m^2)$ time by using Procedure Compute-Next-Position show below.

\begin{procedure}[h!] \label{proc:cnp}
\caption{Compute-Next-Position($x$)}
\begin{small}
\LinesNumbered
\SetNlSty{normal}{}{.}
\SetAlgoNoEnd
\KwIn{A string $x$ of length $m$}
\KwOut{A table representing the next-position function $\mu_x$}
\ForEach{$c \in \Sigma$} { 
    \For{$i\gets 1$ \textbf{to} $m$} { 
        $\mu(c,i) \gets m+1$\;
    }
}
\For{$i\gets m$ \textbf{downto} $2$} { 
    \For{$j\gets i-1$ \textbf{downto} $1$} { 
        $\mu(x[i],j)\gets i$\;
    }
}
\Return $\mu$\; 
\end{small}
\end{procedure}

\begin{example}\label{ex13}
Let $x=gtgtaccgtgt$ be a string of length $m=11$. We have $\mu_x(g,1) = 3$, $\mu_x(g,4)=\mu_x(g,5)=8$, $\mu_x(g,8)=10$, while $\mu_x(g,10)=12$. 
\end{example}

We also define the \emph{border set function} $\psi_x$ of a given string $x$ as the set of the lengths of all borders of $x$. Specifically we define $\psi_x(i,j)$, for each $1\leq i<j \leq m$,  as the set of the lengths of all borders of the string $x[i .. j]$, so that $k\in \psi_x(i,j)$ if and only if the string $x[i .. j]$ has a border of length $k$. Formally we have
$$
	\psi_x(i,j) := \{ k \ |\ 0<k<j-i \textrm{ and }  x[i .. i+k-1] = x[j-k+1 .. j] \}
$$
\begin{example}\label{ex14}
Let $x=gtgtaccgtgt$ be a string of length $m=11$. We have 
$$
\begin{array}{ll}
 \psi_x(1,11)= \{2,4,11\},  & \textrm{ since the set of borders of $gtgtaccgtgt$ is } \{gt, gtgt, gtgtaccgtgt\};\\
 \psi_x(1,4)= \{2,4\}, & \textrm{ since the set of borders of $gtgt$ is } \{gt, gtgt\};\\
 \psi_x(4,9)= \{1,6\}, & \textrm{ since the set of borders of $taccgt$ is } \{t, taccgt\}; \\
 \psi_x(5,7)= \{3\} & \textrm{ since the set of borders of $acc$ is } \{acc\}.\\
\end{array}
$$
\end{example}

\begin{procedure}[h!] \label{proc:cbs}
\caption{Compute-Border-Set($x$)}
\begin{small}
\LinesNumbered
\SetNlSty{normal}{}{.}
\SetAlgoNoEnd
\KwIn{A string $x$ of length $m$}
\KwOut{A table representing the border-set function $\Psi_x$}
\For{$i\gets 1$ \textbf{to} $m$} { 
    \For{$j\gets i$ \textbf{to} $m$} { 
        \For{$k\gets i$ to $j$} { 
            $\Psi[i,j,k] \gets 0$\;
        }
    }
}
\For{$i\gets 1$ to $m$} { 
    \For{$j\gets i$ to $m$} {
        $\pi\gets$ Compute-Border-Table$(x,i,j)$\; 
        \For{$k\gets 0$ \textbf{to} $j-i+1$} { 
            $\Psi[i,j,\pi[k]]\gets 1$ \; 
        }
    }
}
\Return $\Psi$\; 
\end{small}
\end{procedure}

For each $i,j$ with $1\leq i \leq j \leq m$, we can represent the set $\psi_x(i,j)$ by using a vector of $(j-i+1)$ boolean values such that its $k$-th entry is set iff $k \in \psi_x(i,j)$. More formally the function $\psi_x$ can be maintained using a tridimensional bit-table $\Psi_x$, which we call \emph{border set table of $x$}, defined as
$$
	\Psi_x[i,j,k] := \left\{
			\begin{array}{ll}
				1 ~~~~	& \textrm{ if } x[i .. i+k-1] = x[j-k+1 .. j]\\
				0 		& \textrm{ otherwise } 
			\end{array}
	\right.
$$
for $1\leq i<j\leq m$ and $k < j-i$.

The border set table $\Psi_x$ can be computed in $\bigO(m^3)$-time and space by using Procedure Compute-Border-Set, where Compute-Border-Table is the $O(m)$ function used in the Knuth-Morris-Pratt algorithm \cite{KMP77}.

Observe that using $\Psi_x$ we can answer in constant-time to queries of the the type ``\emph{is $k$ the length of a border of the substring $x[i .. j]$?}'', which translates to evaluate if $\Psi[i,j,k]$ is set.


In addition we define the \emph{shortest border function} of a string $x$, as the function $\rho_x: \{1, \ldots, m\} \times \{1, \ldots, m\} \rightarrow \{1, \ldots, m\}$ which associates any nonempty substring of $x$ to the length of its shortest border. Specifically we set $\rho_x(i,j)$ to be the length of the shortest border of the string $x[i \ldots j]$, for $1\leq i <j \leq m$. 
More formally we have
$$
	\rho_x(i,j) := \min  \{ k\ |\ 0\leq k<j-i \textrm{ and } x[i .. i+k-1] = x[j-k+1 .. j] \}   = \min(\psi(i,j))
$$
It is trivial to observe that, if we already computed the border set function $\psi_x$, for the pattern $x$, the shortest border function $\rho_x$ of $x$ can be computed in $\bigO(m^3)$-time using $\bigO(m^2)$ space.

\begin{example}\label{ex5}
Let $x=gtgtaccgtgt$ be a string of length $m=11$. According to Example \ref{ex14}, we have 
$$
\begin{array}{ll}
 \rho_x(1,11)= 2,  & \textrm{ since $gt$ is the shortest nonempty border of $gtgtaccgtgt$};\\
 \rho_x(1,4)= 2, & \textrm{ since $gt$ is the shortest nonempty border of $gtgt$};\\
 \rho_x(4,9)= 1, & \textrm{ since $t$ is the shortest nonempty border of $taccgt$}; \\
 \rho_x(5,7)= 3 & \textrm{ since $acc$ is the shortest nonempty border of $acc$}.\\
\end{array}
$$
\end{example}

In what follows we will use the symbols $\mu$, $\psi$ and $\rho$, in place of  $\mu_x$, $\psi_x$ and $\rho_x$, respectively, when the reference to $x$ is clear from the context.


\subsection{The Searching Phase} \label{sec:searching}

The searching phase of the algorithm is based on an alignment procedure\footnote{The alignment procedure described in this paper has been presented in a preliminary form in \cite{FP19}.} specifically designed to check, given two input strings $x$ and $y$ of the same length, if a set of translocations exists allowing to transform $x$ in $y$. The searching phase of our Algorithm \ref{fig:code3} simply consists in applying the alignment procedure between the pattern and any possible substring of the text of length $m$. 

\smallskip

\begin{algorithm}[h!] 
\caption{\label{fig:code3}Alignment Based Solution}
\begin{small}
\LinesNumbered
\SetNlSty{normal}{}{.}
\SetAlgoNoEnd
\KwIn{A pattern $x$ and a text $y$}
\KwOut{The number of approximate occurrences of $x$ in $y$}
\For{$j\leftarrow 0$ \emph{\textbf{to}} $n-m-1$} { \nllabel{alg3:loop1}
    $\Gamma^{(0)} \gets \{(0,0,$null,null$,0)\}$ \;
    \For{$i\leftarrow 1$ \emph{\textbf{to}} $m$} { 
        \ForEach{$(s_1,k_1,s_2,k_2,t) \in \Gamma^{(i-1)}$}  {  
             \If{$k_2=$ \emph{null}} { 
                \If{$x[i]=y[s+i]$} {
                    $\Gamma^{(i)} \gets \Gamma^{(i)} \cup \{ (i, \nil, \nil,\nil,t) \}$\; 
                }
                $r\gets \mu(y[s+i],i)$\;
                \While{$r\leq m$} { 
                    $\Gamma^{(i)} \gets \Gamma^{(i)} \cup \{ (i-1,0, r-1,t+1) \}$\; 
                    $r\gets \mu(y[s+i],r)$\;
                }
            }
            \Else {
                \If{$k_1=$ \emph{\textsf{null}}} { 
                    \If{$x[s_2+k_2+1] = y[s+i]$} {
                        $\Gamma^{(i)} \gets \Gamma^{(i)} \cup \{ (s_1,k_1, s_2, k_2+1,t) \}$\; 
                    }
                    \Else {
                        $k_1\gets 0$\;
                    }
                }
                \If{$k_1\geq 0$} {
                    \While{$s_1+k_1<s_2$ \emph{\textbf{and}} $k_2>0$ \emph{\textbf{and}} $x[s_1+k_1+1] \neq y[s+i]$)} { 
                        $b \gets 0$\;
                        \Repeat{$s_1+k_1+b<s_2$ \emph{\textbf{and}} $k_2-b>0$ \emph{\textbf{and}} $(k_1-s_1+1) \notin \phi(s_1+1,k_1+b)$} {
                            $b \gets b+\rho(s_1+1, s_2+k_2-b)$\;
                            $k_2 \gets k_2-b$\;
                            $k_1 \gets k_1+b$\;
                        }
                    }
                    \If{$s_1+k_1\geq s_2$ or $k_2\leq 0$} {
                        \textbf{break}\;
                    }
    
                    \If{$x[s_1+k_1+1] = y[s+i]$} { 
                        \If{$s_1+k_1 = s_2$ \emph{\textbf{and}} $(i,$ \emph{\nil, \nil, \nil}, t$) \notin \Gamma^{(i)}$} {
                            $\Gamma^{(i)} \gets \Gamma^{(i)} \cup \{ (i, \nil, \nil, \nil, t)\}$ \; 
                        }
                        \Else {
                            $\Gamma^{(i)} \gets \Gamma^{(i)} \cup \{ (s_1, k_1+1, s_2, k_2,t) \}$\;
                        }
                    }
                }
            }
        }
    }
    \If{$\Gamma^{(m)} \neq \emptyset$} {
        \textbf{Output} $j$\;
    }
}
\end{small}
\end{algorithm}

The alignment procedure finds a possible $utd$-match between two equal length strings. The pseudocode of Algorithm \ref{fig:code3} is tuned to process two strings $x$ and $y$, of length $m$, where translocations can take place only in $x$.

In order to probe the details of the alignment procedure, let $x$ and $y$ be two strings of length $m$ over the same alphabet $\Sigma$. The procedure sequentially reads all characters of the string $y$, proceeding from left to right. While scanning it tries to evaluate all possible unbalanced translocations in $x$ which may be involved in the alignment between the two strings. 

We define a \emph{translocation attempt} at position $i$ of $y$, for $1\leq i \leq m$, as a quintuple of indexes, $(s_1, k_1, s_2, k_2, t)$, with all elements in $\{0,1,2, \ldots, m\} \cup \{\textsf{null}\}$ and where, referring to the string $x$, $s_1$ and $k_1$ pinpoints the leftmost position and the length of the first factor (the factor moved on the left), while $s_2$ and $k_2$ pinpoints the leftmost position and the length of the second factor (the factor moved on the right). 
In this context we refer to $s_1$ and $s_2$ as the \emph{key positions} of the translocation attempt.
The last value $t$ of the quintuple identify the cost of the attempt in terms of number of translocations.

In the special case where no translocation takes place in the attempt we assume by convention that $s_1 = i$ and $s_2=k_1=k_2 = \textsf{null}$.\footnote{We use the value \textsf{null} to indicate the length of an undefined string in order to discriminate it from the length of an empty string whose value is $0$ by definition.} 
During the execution of the algorithm for each translocation attempt, $(s_1, k_1, s_2, k_2,t)$,  at position $i$, the invariant given by the following lemma\footnote{In this context we assume that $s+\textsf{null} = s$, for any $s$.} holds.

\begin{lemma}\label{th0}
Let $y$ and $x$ be two strings of length $m$ over the same alphabet $\Sigma$. Let $\Gamma^{(i)}$ be the set of all translocation attempts computed by Algorithm \ref{fig:code3} during the $i$-th iteration. If $(s_1, k_1, s_2, k_2,t) \in \Gamma^{(i)}$ then it holds that:
\begin{enumerate}
\item[$(a)$] $i = s_1+k_1+k_2$;
\item[$(b)$] $x_i \trdmatch y_i$;
\item[$(c)$] if $s_2\neq$ \emph{\nil} then $x[s_1+1..s_1+k_1] = y[s_2+1..s_2+k_1]$;
\item[$(d)$] if $s_2\neq$ \emph{\nil} then $y[s_2+1..s_2+k_2] = y[s_1+1..s_1+k_2]$;
\end{enumerate}
\end{lemma}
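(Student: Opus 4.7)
The plan is to prove all four invariants simultaneously by induction on the iteration index $i$ of the outer loop of Algorithm \ref{fig:code3}. The base case $i=0$ is immediate: $\Gamma^{(0)} = \{(0,0,\textsf{null},\textsf{null},0)\}$, so with the convention $s+\textsf{null}=s$, condition $(a)$ reads $0 = 0+0+0$, condition $(b)$ asserts $\varepsilon \trdmatch \varepsilon$, and conditions $(c)$--$(d)$ hold vacuously because $s_2 = \textsf{null}$.

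For the inductive step I would fix an arbitrary $(s_1', k_1', s_2', k_2', t') \in \Gamma^{(i)}$ produced from a parent $(s_1, k_1, s_2, k_2, t) \in \Gamma^{(i-1)}$ satisfying the invariants, and verify the four properties according to which branch of the algorithm introduced the new attempt. Three main cases arise. If $k_2 = \textsf{null}$, then either the exact match is extended via $x[i] = y[s+i]$, producing $(i,\textsf{null},\textsf{null},\textsf{null},t)$, or a new translocation is opened by scanning the next-position function $\mu$ to locate some $r>i$ with $x[r] = y[s+i]$, yielding a tuple that records the key positions $s_1' = i-1$ and $s_2' = r-1$; in both sub-cases invariants $(a)$--$(d)$ follow directly from the parent's invariants together with the newly matched character. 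If $k_2 \neq \textsf{null}$ and $k_1 = \textsf{null}$, we are still extending the second factor along $y$; here $(d)$ is preserved by appending a matched character on both sides of the equality, while the alternative transition $k_1 \gets 0$ leaves every relevant quantity unchanged. Finally, if $k_1 \neq \textsf{null}$, we enter the border-shifting inner loop, possibly followed by the first-factor match $x[s_1+k_1+1] = y[s+i]$; here the arithmetic in $(a)$ is immediate at every transition, and invariant $(b)$ is propagated by composition, because whenever the state collapses to $(i,\textsf{null},\textsf{null},\textsf{null},\cdot)$ the conjunction of $(c)$ and $(d)$ certifies that a genuine unbalanced translocation has been realized and can be appended to the utd-match already granted by the inductive hypothesis.

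The main technical obstacle is the border-shifting step, where the loop performs $k_1 \gets k_1 + b$ and $k_2 \gets k_2 - b$ with $b$ obtained from iterated calls to $\rho(s_1+1, s_2+k_2-b)$. The delicate point is to show that after each such shift the equalities $(c)$ and $(d)$ still hold with their updated parameters. My plan is to exploit the correctness of the border-set table $\Psi$ established in Section \ref{sec:preprocessing}: a border of $x[s_1+1 .. s_2+k_2]$ of length $k$ provides exactly the structural repetition needed to re-split the first and second factors by $b$ characters while preserving the matches of $x$ against the two corresponding regions of $y$. Making this precise reduces to a short calculation on substring equalities combined with the parent's invariants $(c)$ and $(d)$, and is precisely the place where the preprocessing of $\rho$ and $\Psi$ is crucially used.
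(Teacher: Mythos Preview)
Your approach---induction on $i$ with a case split over the three algorithm branches---is exactly what the paper does, and your base case and bookkeeping for invariant $(a)$ match the paper's argument line by line. In fact you go further than the paper: its proof only verifies $(a)$ explicitly in each case and never mentions the border-shifting \textsf{while} loop at all, whereas you correctly flag that loop as the main technical point and sketch how the border structure of $x[s_1{+}1 .. s_2{+}k_2]$ justifies the simultaneous update $k_1 \gets k_1{+}b$, $k_2 \gets k_2{-}b$ while preserving $(c)$ and $(d)$.

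One caution worth recording: invariant $(b)$ as literally stated---$x_i \trdmatch y_i$ for \emph{every} tuple in $\Gamma^{(i)}$---is not what either argument actually establishes, and indeed it cannot hold for an attempt in mid-translocation (only the weaker $x_{s_1} \trdmatch y_{s_1}$ is maintained, with the full $x_i \trdmatch y_i$ recovered at collapse to $(i,\textsf{null},\textsf{null},\textsf{null},t)$). The paper's proof glosses over this just as your sketch does; if you want a clean inductive invariant you should reformulate $(b)$ accordingly before carrying out the induction.
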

\begin{proof}
We prove the statement by induction on the value of $i$. For the base case observe that when $i=0$ we have $\Gamma^{(0)} = \{0,\nil, \nil, \nil,0\}$ and the statements trivially hold. 

Suppose now that the statements hold for $i-1\geq 0$ and prove that they hold also for $i$. Let $(s_1, k_1, s_2, k_2,t)\in \Gamma^{(i-1)}$ be a translocation attempt computed at iteration $i-1$. By induction we have $s_1+k_1+k_2 = i-1$.  We can distinguish the following three different cases.

If we are in Case 1 and $x[i]=y[i]$ then $(i, \nil, \nil, \nil, t)$ is added to $\Gamma^{(i)}$ which trivially satisfies all the statements (\emph{a-d}). Moreover, for each $j>i$ such that $x[j]=y[i]$ the new translocation attempt $(i-1,\nil,j-1,1,t+1)$ is added to $\Gamma^{(i)}$. In this last case we have that the statement ($b$) still holds and we have also that $s_1 + k_1+k_2 = s_1+1 = i-1+1 = i$, so the statement ($a$) holds too. 

If we are in Case 2 the right factor can be extended by a single character to the right adding $(s_1, k_1, s_2, k_2+1,t)$ to $\Gamma^{(i)}$. Also in this case we have $s_1+k_1+k_2+1 = i-1+1=i$ and the property is satisfied.

Finally, if we are in Case 3 the left factor can be extended by a single character to the right adding $(s_1, k_1+1, s_2, k_2,t)$ to $\Gamma^{(i)}$ getting again $s_1+k_1+1+k_2 = i-1+1 = i$.
Observe also that if we close a translocation attempt (when $k_1+1=s_2$) the attempt $(i, \nil, \nil, \nil,t)$ is added to $\Gamma^{(i)}$, which trivially satisfies the property.
\end{proof}

For each $1\leq i \leq m$, we define $\Gamma^{(i)}$ as the set of all translocation attempts tried for the prefix $y[1 .. i]$, and set $\Gamma^{(0)} = \{(0,\textsf{null},\textsf{null},\textsf{null}),0\}$.  

However we can prove that it is not necessary to process all possible translocation attempts. Some of them, indeed, leads to detect the same $utd$-matches so that they can be skipped.
Specifically it can be proved that the following Lemma holds.

\begin{lemma}\label{th1}
Let $y$ and $x$ be two strings of length $m$ over the same alphabet $\Sigma$. Let $s \sqsubseteq y$ and $u \sqsubseteq x$ such that $|s|=|x|$ and $s \trdmatch u$. Moreover assume that
\begin{itemize}
\item [\emph{(i)}] $s.w.z \sqsubseteq y$ and $u.z.w \sqsubseteq x$
\item [\emph{(ii)}] $s.w'.z \sqsubseteq y$ and $u.z.w' \sqsubseteq x$
\end{itemize}
with $|z|>0$ and $|w'|>|w|>0$. If we set $i = |s.w.z|$ and $j=|s.w'.z|$ then we have $x[i+1 .. j] \trdmatch y[i+1 .. j]$.
\end{lemma}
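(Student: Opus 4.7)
The plan is to show that the ``gap'' strings $x[i+1..j]$ and $y[i+1..j]$ are cyclic shifts of a single common string, so that a single translocation relates them. The starting point is the observation that, since $|w|<|w'|$, hypothesis (i) forces $s.w.z$ to be a prefix of $s.w'.z$ inside $y$, while hypothesis (ii) forces $u.z.w$ to be a prefix of $u.z.w'$ inside $x$. Canceling $u.z$ on the $x$-side yields that $w$ is a prefix of $w'$, so one may write $w' = w\alpha$ with $|\alpha|=|w'|-|w|>0$; canceling the common $s.w$ on the $y$-side then yields that $z$ is a prefix of $\alpha z$.

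Next I would compute the two gap strings explicitly. On the $x$-side, since $x$ begins with $u.z.w.\alpha$ and $i=|u.z.w|$, it follows at once that $x[i+1..j]=\alpha$. On the $y$-side I would distinguish two cases. If $|z|\ge|\alpha|$, the condition ``$z$ is a prefix of $\alpha z$'' forces $z$ to start with $\alpha$ and, iterating, to be periodic of period $|\alpha|$; writing $|z|=k|\alpha|+r$ with $0\le r<|\alpha|$, a direct positional computation shows that $y[i+1..j]$ coincides with the last $|\alpha|$ characters of $z$, namely $\alpha[r+1..|\alpha|]\,\alpha[1..r]$. If instead $|z|<|\alpha|$, the same condition forces $z$ to be the prefix $\alpha[1..|z|]$ of $\alpha$, and an analogous positional computation gives $y[i+1..j]=\alpha[|z|+1..|\alpha|]\,\alpha[1..|z|]$.

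In either case $y[i+1..j]$ is a cyclic shift of $\alpha=x[i+1..j]$: namely the rotation obtained from $\alpha$ by the single translocation that swaps the adjacent factors $\alpha[1..r]$ and $\alpha[r+1..|\alpha|]$ (with $r$ replaced by $|z|$ in the second case). Both factors are nonempty whenever $r>0$ (resp.\ $|z|>0$), so the translocation is legal, and one concludes $x[i+1..j]\trdmatch y[i+1..j]$.

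The main obstacle I expect lies in the first case above: deriving the period property from ``$z$ is a prefix of $\alpha z$'' and then verifying, via careful index arithmetic, that the last $|\alpha|$ characters of an $|\alpha|$-periodic string of length $k|\alpha|+r$ really are the rotation $\alpha[r+1..|\alpha|]\,\alpha[1..r]$. The degenerate situation $r=0$ (which makes $y[i+1..j]=\alpha$ outright) is covered by a $utd$-match of cost $0$ and does not break the argument.
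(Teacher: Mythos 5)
Your proof is correct, and it takes a genuinely different route from the paper's. The paper argues by a three-way case split on $|w'|$ (namely $|w'|\ge|w|+|z|$, then $|w|+|z|/2\le|w'|<|w|+|z|$, then $|w'|<|w|+|z|/2$), in each case decomposing $z$ through an explicit border (writing $w'=w.z.a$, or $z=b.a.b$, or $z=a.b=b.a$) and reading off the two gap strings as $z.a$ versus $a.z$, as $b.a$ versus $a.b$, or as equal. You replace this by a single uniform periodicity argument: writing $w'=w\alpha$, the prefix conditions give that $z$ is a prefix of $\alpha z$, hence $\alpha z$ has period $|\alpha|$, so $x[i+1\,..\,j]=\alpha$ while $y[i+1\,..\,j]$ is the cyclic rotation $\alpha[r+1\,..\,|\alpha|]\,\alpha[1\,..\,r]$ with $r=|z|\bmod|\alpha|$ (your two cases $|z|\ge|\alpha|$ and $|z|<|\alpha|$ are both instances of this formula), and a rotation is reached by at most one translocation of two adjacent nonempty blocks, with cost $0$ in the degenerate case $r=0$. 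Besides eliminating the case analysis and the border bookkeeping, your formulation is actually more robust in the third regime: when $|w'|<|w|+|z|/2$ the paper asserts $z=a.b=b.a$ and concludes $x[i+1\,..\,j]=y[i+1\,..\,j]$, which can fail (take $x=ababacab$, $y=cabababa$, $s=u=\varepsilon$, $w=c$, $w'=cab$, $z=ababa$: the gaps are $ab$ and $ba$), whereas your rotation statement is the correct invariant there and still delivers the claimed $utd$-match, so the lemma survives. What the paper's presentation buys is concreteness: its three pictured configurations make explicit the border manipulations that the searching procedure (Case 3.b of Algorithm \ref{fig:code3}) later mimics. If you adopt your proof, just spell out the two routine facts you flag, namely that ``$z$ prefix of $\alpha z$'' forces the period-$|\alpha|$ structure of $\alpha z$, and the index computation identifying the last $|\alpha|$ characters of $\alpha z$ with the stated rotation.
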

\begin{proof}
We will consider the following three cases, which depend on the length of $w'$ in relation with the length of $w.z$.
The three cases discussed below are depicted in Figure \ref{fg12}, cases (a), (b) and (c), respectively.
\begin{itemize}
\item{Case (a)}: suppose first  $|w'| \geq |w|+|z|$.
Refering to $y$, we have that $w.z$ is a prefix of $w'$ and we can write  $w' = w.z.a$ for some $a\in\Sigma^*$, with $|a|\geq 0$. In the specific case where $|w'|=|w|+|z|$ we have $w'=w.z$ and $a=\varepsilon$. Thus by (ii) we can write $s.w.z.a.z \sqsubseteq y$ where  $y[i+1 .. j]=a.z$. Similarly we have $u.z.w' = u.z.w.z.a \sqsubseteq x$, so that  $x[i+1 .. j] = z.a$. Thus it follows that $x[i+1 .. j] \trdmatch y[i+1 .. j]$.

\item{Case (b)}: suppose now  $|w|+|z|/2 \leq |w'| < |w|+|z|$.
Referring to $y$, observe that $z$ has a border $b$, with $|b|\leq |z|/2$. Thus we can write  $z = b.a.b$ for some $a\in\Sigma^*$, with $|a|\geq 0$. In the specific case where $|b|=|z|/2$ we have $z=b.b$ and $a=\varepsilon$.
Observe that $w$ is a prefix of $w'$. Let $c$ be the suffix of $w'$ such that $w' = w.c$. Since $c$ is also the prefix of $z$ of length $|w'|-|w| = |z|-|b|$, it follows $c = b.a$. Thus we have $y[i+1 .. j] = a.b$ and $x[i+1 .. j]= b.a$, and it trivially follows that $x[i+1 .. j] \trdmatch y[i+1 .. j]$.

\item{Case (c)}: suppose finally  $|w'| < |w|+|z|/2$.
 Also in this case $z$ has a border $b$. However now we have $|b|> |z|/2$ and $z$ is of the form $z=a.b=b.a$, since $|a|<|b|$ and $a$ is a border of both $z$ and $b$.
 Referring to $y$, we can observe also that $a \sqsupseteq w'$ and that $|w'|-|w| = |a|$. Thus we have $x[i+1 .. j] = y[i+1 .. j] = a$. proving the Lemma.
\end{itemize}
\end{proof}

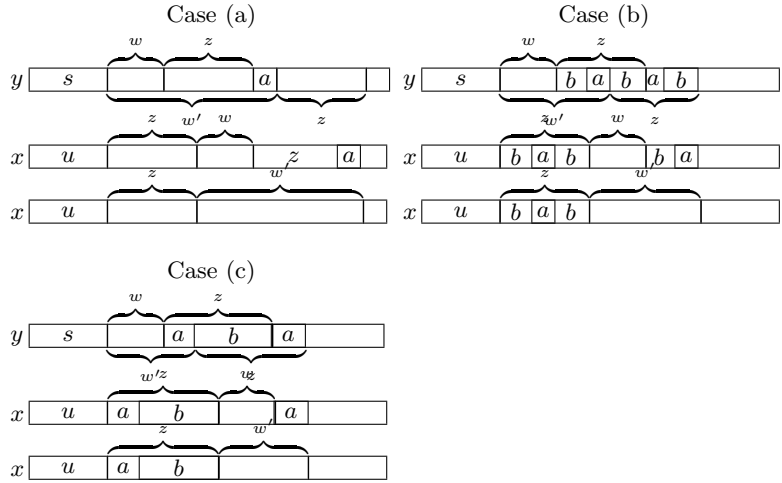
\begin{figure}[t]
\begin{center}
\setlength{\unitlength}{0.006\textwidth}
\setlength{\fboxrule}{1mm}

\begin{tabular}{lr}
\begin{picture}(70,45)
 \put(35,40){\makebox(3,3)[c]{\small Case (a)}}
 \put(0,28){\makebox(30,3)[l]{$y$}}
 \put(2,28){  $                    \framebox(14,4)[c]{$s$}
		                         \hspace{-1pt}\overbrace{\framebox(10,4)[c]{}}^{w}
		                         \hspace{-1pt}\overbrace{\framebox(16,4)[c]{}}^{z}
		                        \hspace{-1pt}\framebox(4,4)[c]{$a$}
		                        \hspace{-1pt}\underbrace{\framebox(16,4)[c]{}}_{z}
		                        \hspace{-1pt}\framebox(4,4)[c]{}$}
 \put(16,28){  $                    \underbrace{\makebox(31,4)[c]{}}_{w'}$}

 \put(0,14){\makebox(30,3)[l]{$x$}}
 \put(2,14){ $	\framebox(14,4)[c]{$u$}
                         \hspace{-1pt}\overbrace{\framebox(16,4)[c]{}}^{z}
                         \hspace{-1pt}\overbrace{\framebox(10,4)[c]{}}^{w}
                       \hspace{-1pt}\framebox(24,4)[c]{}$}
 \put(42,14){ \makebox(16,4)[c]{$z$}\framebox(4,4)[c]{$a$}}

 \put(0,4){\makebox(30,3)[l]{$x$}}
 \put(2,4){ $	\framebox(14,4)[c]{$u$}
                         \hspace{-1pt}\overbrace{\framebox(16,4)[c]{}}^{z}
                         \hspace{-1pt}\overbrace{\framebox(30,4)[c]{}}^{w'}
                       \hspace{-1pt}\framebox(4,4)[c]{}$}
\end{picture} &
\begin{picture}(70,45)
 \put(35,40){\makebox(3,3)[c]{\small Case (b)}}
 \put(0,28){\makebox(30,3)[l]{$y$}}
 \put(2,28){  $                 \framebox(14,4)[c]{$s$}
		                         \hspace{-1pt}\overbrace{\framebox(10,4)[c]{}}^{w}
		                         \hspace{-1pt}\overbrace{\framebox(16,4)[c]{}}^{z}
		                        \hspace{-1pt}\framebox(24,4)[c]{}$}
 \put(16,28){  $                    \underbrace{\makebox(20,4)[c]{}}_{w'}
 					\hspace{-1pt}\underbrace{\makebox(16,4)[c]{}}_{z} $}
 \put(26,28){ \makebox(6,4)[c]{$b$}\framebox(4,4)[c]{$a$}\makebox(6,4)[c]{$b$}\makebox(4,4)[c]{$a$}\framebox(6,4)[c]{$b$}}

 \put(0,14){\makebox(30,3)[l]{$x$}}
 \put(2,14){ $	\framebox(14,4)[c]{$u$}
                         \hspace{-1pt}\overbrace{\framebox(16,4)[c]{}}^{z}
                         \hspace{-1pt}\overbrace{\framebox(10,4)[c]{}}^{w}
                       \hspace{-1pt}\framebox(24,4)[c]{}$}
 \put(16,14){ \makebox(6,4)[c]{$b$}\framebox(4,4)[c]{$a$}\makebox(6,4)[c]{$b$}}
 \put(42,14){ \makebox(6,4)[c]{$b$}\framebox(4,4)[c]{$a$}}

 \put(0,4){\makebox(30,3)[l]{$x$}}
 \put(2,4){ $	\framebox(14,4)[c]{$u$}
                         \hspace{-1pt}\overbrace{\framebox(16,4)[c]{}}^{z}
                         \hspace{-1pt}\overbrace{\framebox(20,4)[c]{}}^{w'}
                       \hspace{-1pt}\framebox(14,4)[c]{}$}
 \put(16,4){ \makebox(6,4)[c]{$b$}\framebox(4,4)[c]{$a$}\makebox(6,4)[c]{$b$}}
\end{picture}\\
\begin{picture}(70,45)
 \put(35,40){\makebox(3,3)[c]{\small Case (c)}}
 \put(0,28){\makebox(30,3)[l]{$y$}}
 \put(2,28){  $                 \framebox(14,4)[c]{$s$}
		                         \hspace{-1pt}\overbrace{\framebox(10,4)[c]{}}^{w}
		                         \hspace{-1pt}\overbrace{\framebox(19.5,4)[c]{}}^{z}
		                        \hspace{-1pt}\framebox(20,4)[c]{}$}
 \put(16,28){  $                    \underbrace{\makebox(16,4)[c]{}}_{w'}
 					\hspace{-1pt}\underbrace{\makebox(20,4)[c]{}}_{z} $}
 \put(26,28){ \makebox(6,4)[c]{$a$}\framebox(14,4)[c]{$b$}\framebox(6,4)[c]{$a$}}

 \put(0,14){\makebox(30,3)[l]{$x$}}
 \put(2,14){ $	\framebox(14,4)[c]{$u$}
                         \hspace{-1pt}\overbrace{\framebox(20,4)[c]{}}^{z}
                         \hspace{-1pt}\overbrace{\framebox(10,4)[c]{}}^{w}
                       \hspace{-1pt}\framebox(20,4)[c]{}$}
 \put(16,14){ \makebox(6,4)[c]{$a$}\framebox(14.2,4)[c]{$b$}}
 \put(46.5,14){ \framebox(6,4)[c]{$a$}}

 \put(0,4){\makebox(30,3)[l]{$x$}}
 \put(2,4){ $	\framebox(14,4)[c]{$u$}
                         \hspace{-1pt}\overbrace{\framebox(20,4)[c]{}}^{z}
                         \hspace{-1pt}\overbrace{\framebox(16,4)[c]{}}^{w'}
                       \hspace{-1pt}\framebox(14,4)[c]{}$}
 \put(16,4){ \makebox(6,4)[c]{$a$}\framebox(14.2,4)[c]{$b$}}
\end{picture} & \\
%
\end{tabular}
\end{center}
\caption{\label{fg11}The three cases of Lemma \ref{th1}. In {Case (a)} we suppose  $|w'| \geq |w|+|z|$; in {Case (b)} it is supposed that  $|w|+|z|/2 \leq |w'| < |w|+|z|$; finally in {Case (c)} we suppose   $|w'| < |w|+|z|/2$.}
\end{figure}

The procedure iterates on the values of $i$, for $1\leq i \leq m$, while scanning the characters of $y$, and during the $i$-th iteration it computes the set $\Gamma^{(i)}$ from $\Gamma^{(i-1)}$.
For each translocation attempt $(s_1, k_1, s_2, k_2,t) \in \Gamma^{(i-1)}$ we distinguish the following three cases (depicted in Fig. \ref{fg11}):
\begin{itemize}

\item{Case 1} ($s_2=k_1=k_2=$ \textsf{null})\\
This is the case where no unbalanced translocation is taking place (line 6). Thus we simply know that $x_{i-1} \trdmatch y_{i-1}$. If $x[i]=y[i]$ the match is extended of one character by adding the attempt $(s_1+1$,\textsf{null},\textsf{null},\textsf{null}$,0)$ to $\Gamma^{i}$ (line 7). Alternatively, when possible, new translocation attempts are started (lines 9-11). Specifically for each occurrence of the character $y[i]$ in $x$, at a position $r$ next to $s_1$, a new right factor $u_r$ is attempted starting at position $r$ (line 10) by extending $\Gamma^{(i)}$ with the attempt $(s_1, 0, r-1,1,t+1)$.

\begin{figure}[t]
\begin{center}
\setlength{\unitlength}{0.006\textwidth}
\setlength{\fboxrule}{1mm}

\begin{tabular}{lr}
\begin{picture}(60,40)
 \put(35,27){\makebox(3,3)[c]{\small Case (1)}}
 \put(0,16){\makebox(30,3)[l]{$y$}}
 \put(2,16){  $\overbrace{\framebox(8,4)[c]{}}^{i-1}$\framebox(54,4)[c]{}}
\put(12,16.5){\makebox(3,3)[l]{$\bullet$}}
 \put(0,4){\makebox(30,3)[l]{$x$}}
 \put(2,4){  $\overbrace{\framebox(8,4)[c]{}}^{i-1}$\framebox(54,4)[c]{}}
\put(12,4.5){\makebox(3,3)[l]{$\bullet$}}
\end{picture} &
%
\begin{picture}(70,40)
 \put(35,27){\makebox(3,3)[c]{\small Case (2)}}
 \put(0,16){\makebox(30,3)[l]{$y$}}
 \put(2,16){  $\overbrace{\framebox(8,4)[c]{}\framebox(26,4)[c]{$u_r$}}^{i-1}$\framebox(28,4)[c]{}}
\put(38,16.5){\makebox(3,3)[l]{$\bullet$}}
 \put(0,4){\makebox(30,3)[l]{$x$}}
 \put(2,4){ 	\framebox(8,4)[c]{}\framebox(22,4)[c]{}$\underbrace{\framebox(26,4)[c]{$u_r$}}_{k_2}$\framebox(6,4)[c]{}}
\put(60,4.5){\makebox(3,3)[l]{$\bullet$}}
\put(8,8){\makebox(3,3)[l]{$s_1$}}
\put(30,8){\makebox(3,3)[l]{$s_2$}}
\end{picture}\\
%
\begin{picture}(70,55)
 \put(35,42){\makebox(3,3)[c]{\small Case (3.a)}}
 \put(0,32){\makebox(30,3)[l]{$y$}}
 \put(2,32){  $\overbrace{\framebox(8,4)[c]{}\framebox(26,4)[c]{$u_r$}\framebox(10,4)[c]{$u_{\ell}$}}^{i-1}$\framebox(18,4)[c]{}}
\put(48,32.5){\makebox(3,3)[l]{$\bullet$}}
 \put(0,16){\makebox(30,3)[l]{$x$}}
 \put(2,16){ 	\framebox(8,4)[c]{}$\underbrace{\framebox(10,4)[c]{$u_{\ell}$}}_{k_1}$\framebox(12,4)[c]{}$\underbrace{\framebox(26,4)[c]{$u_r$}}_{k_2}$\framebox(6,4)[c]{}}
\put(22,16.5){\makebox(3,3)[l]{$\bullet$}}
\put(8,20){\makebox(3,3)[l]{$s_1$}}
\put(30,20){\makebox(3,3)[l]{$s_2$}}
\end{picture}&
%
\begin{picture}(70,55)
 \put(35,42){\makebox(3,3)[c]{\small Case (3.b)}}
 \put(0,32){\makebox(30,3)[l]{$y$}}
 \put(3,32){$\overbrace{\framebox(8,4)[c]{}\framebox(26,4)[c]{$u_r$}\framebox(10,4)[c]{$u_{\ell}$}}^{i-1}$\framebox(18,4)[c]{}}
 \put(3,32){\framebox(8,4)[c]{}$\underbrace{\makebox(22,4)[c]{}}_{u'_{\ell}}\hspace{-1pt}\underbrace{\makebox(14,4)[c]{}}_{u'_r}$}
\put(48,32.5){\makebox(3,3)[l]{$\bullet$}}
 \put(0,16){\makebox(30,3)[l]{$x$}}
 \put(2,16){ 	\framebox(8,4)[c]{}$\underbrace{\framebox(10,4)[c]{$u_{\ell}$}}_{k_1}$\framebox(12,4)[c]{}$\underbrace{\framebox(26,4)[c]{$u_r$}}_{k_2}$\framebox(6,4)[c]{}}
\put(8,20){\makebox(3,3)[l]{$s_1$}}
\put(30,20){\makebox(3,3)[l]{$s_2$}}
 \put(0,4){\makebox(30,3)[l]{$x$}}
 \put(2,4){ 	\framebox(8,4)[c]{}$\underbrace{\framebox(8,4)[c]{$w$}\framebox(10,4)[c]{$u_{\ell}$}}_{k_1}$\framebox(4,4)[c]{}$\underbrace{\framebox(18,4)[c]{$u'_r$}}_{k_2}$\framebox(8,4)[c]{$w$}\framebox(6,4)[c]{}}
\put(30,4.5){\makebox(3,3)[l]{$\bullet$}}
\put(8,8){\makebox(3,3)[l]{$s_1$}}
\put(30,8){\makebox(3,3)[l]{$s_2$}}
\end{picture}\\
%
\end{tabular}
\end{center}
\caption{\label{fg12}Three cases of Algorithm \ref{fig:code3} while processing the translocation attempt $(s_1,k_1,s_2,k_2)\in\Gamma^{(i-1)}$ in order to extend it by charcater $y[i]$. Character $y[i]$ and its counterpart in $x$ are depicted by a bullet symbol. Case (1): $x_{i-1} \trdmatch y_{i-1}$ and $x[i]=y[i]$, then the match is extended of one character; Case (2):  the right factor $u_r$ is currently going to be recognized and $x[s_2+k_2+1] = y[i]$, then the right factor $u_r$ can be extended of one character;  Case (3.a): the left factor $u_{\ell}$ can be extended of one character to the right; Case (3.b): the right factor $u_{\ell}$ cannot be extended, then we try to transfer a suffix $w$ of $u_r$ to the prefix position of $u_{\ell}$, reducing the length $k_2$ and extending the length $k_1$ accordingly ($w$ is a suffix of $u_r$ and also a prefix of $x[s_1+1 .. s_2]$ and, in addition, we can move $u_{\ell}$ to the right of $|w|$ position along the left factor).}
\end{figure}
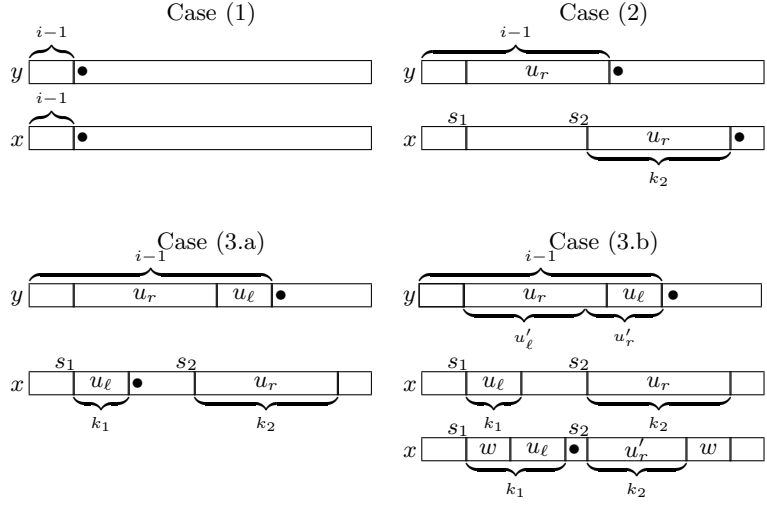

\item{Case 2} ($k_1=0$ and $k_2 > 0$)\\
This is the case where an unbalanced translocation is taking place and the right factor $u_r$ is currently going to be recognized (line 14). Specifically we know that $x[s_2+1 .. s_2+k_2] = y[i-k_2 .. i-1]$ and that $x[1 .. s_1] \trdmatch y[1 .. i-k_2-1]$.
If $x[s_2+k_2+1] = y[i]$ the right factor $u_r$ can be extended of one character to the right, thus $\Gamma^{i}$ is extended by adding the attempt $(s_1,k_1,s_2,k_2+1,t)$ (line 15).
Otherwise, if $x[s_2+k_2+1] \neq y[i]$, the right factor $u_r$ cannot be extended further, thus we start recognizing the left factor $u_{\ell}$. Specifically, in this last case, we update $k_1$ to $0$ (line 17) and move to the following Case 3.

\item{Case 3} ($k_1\geq 0$)\\
This is the case where an unbalanced translocation is taking place,  the right factor $u_r$ has been already recognized and we are attempting to recognize the left factor $u_{\ell}$. Specifically we know that $x[s_1+1 .. s_1+k_1] = y[i-k_1 .. i-1]$, $x[s_2+1 .. s_2+k_2] = y[i-k_1-k_2 .. i-k_1-1]$  and that $x[1 .. s_1] \trdmatch y[1 .. i-k_1-k_2-1]$. We distinguish two sub-cases:
	\begin{itemize}
	\item {Case 3.a} ($x[s_1+k_1+1] = y[i]$)\\
	In this case the left factor $u_{\ell}$ can be extended of one character to the right (line 24). Thus if the left factor has been completely recognized, i.e. if $s_1+k_1 = s_2$, $\Gamma^{i}$ is extended by adding the attempt $(s_1+k_1+k_2, \nil, \nil, \nil),t$ (lines 29-30) which indicates that $x_i \trdmatch y_i$. Otherwise $\Gamma^{(i)}$ is extended by adding the attempt $(s_1, k_1+1, s_2, k_2,t)$ (line 32).
	\item {Case 3.b} ($x[s_1+k_1+1] \neq y[i]$)\\
	In this case the right factor $u_{\ell}$ cannot be extended. Before quitting the translocation attempt we try to find some new factors rearrangements on the same key positions, $s_1$ and $s_2$, but with different lengths, $k_1$ and $k_2$. Specifically we try to transfer a suffix $w$ of $u_r$ to the prefix position of $u_{\ell}$, reducing the length $k_2$ and extending the length $k_1$ accordingly. This can be done only if we find a suffix $w$ of $u_r$ which is also a prefix of $x[s_1+1 .. s_2]$ and, in addition, we can move $u_{\ell}$ to the right of $|w|$ position along the left factor. More formally, if we assume that $|w|=b$ we must have:
		\begin{enumerate}
			\item $|w|< |u_r|$, or rather $b < k_2$  (indicating that $w$ is a proper suffix of $u_r$);
			\item $b \in \phi(s_1+1, s_2+k_2)$  (indicating that $w$ is a border of $x[s_1+1 .. s_2+k_2]$);
			\item $(k_1-s_1+1) \in \phi(s_1+1,k_1+b)$  (indicating that $u_{\ell}$ is a border of $x[s_1+1 .. s_1+k_1+|w|]$);
			\item $s_1+k_1+|w| < s_2$  (indicating that the updated $u_{\ell}$ does not overflow onto $u_r$);
			\item $x[s_1+k_1+|w|+1] = y[i]$  (indicating that the updated $u_{\ell}$ can be extended by $y[i]$)
		\end{enumerate}
	\end{itemize}

\end{itemize}


\subsection{Worst-case time and space analysis}\label{sec:complexity}
In this section we discuss the worst-case time and space analysis of Algorithm \ref{fig:code3} presented in the previous section. 

Let $x$ be a pattern of length $m$ and $y$ be a text of length $n$ over the same alphabet $\Sigma$ and assume to run Algorithm \ref{fig:code3} for searching all $\delta$-bounded approximate occurrences of $x$ in $y$.

Regarding the space analysis, as stated in Section \ref{sec:preprocessing} we need $O(m\sigma)$ additional space to maintain the next position function, $O(m^3)$-space to maintain the border set function and $O(m^2)$-space to maintain the shortest border function. During the searching phase we need also $O(m)$ space to maintain the key position table $\gamma$. 
Thus the overall space complexity of the algorithm is $O(m^3)$.

Regarding the time analysis, we first consider the time complexity required by a single iteration of the main loop of line \ref{alg3:loop1} of the algorithm. Specifically let $\Gamma^{(i)}$ be the the set of all translocation attempts computed at iteration $i$, for $0\leq i \leq m$. 

First of all we observe that each $\Gamma^{(i)}$ contains at most one translocation attempt with $k_2=\textsf{null}$ (i.e. of the form $(s_1, \nil, \nil, \nil),t$).  We put $\Gamma^{(0)} = \{(0,\textsf{null},\textsf{null},\textsf{null},0)\}$ (line 2), thus the statement holds for $i=0$. Observe that, if $i>0$ a translocation attempt of the form $(s_1, \nil, \nil, \nil,t)$ can be added to $\Gamma^{(i)}$ only in line 7 or in line 30. However by condition at line 29, if it is added to $\Gamma^{(i)}$ in line 7, it cannot be added again in line 30.

\smallskip

We now prove that the total number of translocation attempts processed during a sngle execution of the main loop of the algorithm is bounded by $m^3$. More formally we have
\begin{equation}\label{eq:m2}
	\sum_{i=0}^{m}|\Gamma^{(i)}|\leq m^3.
\end{equation}
To prove that equation (\ref{eq:m2}) holds observe that new translocation attempts are added to $\Gamma^{(i)}$ only when we are in Case 1. When we are in Case 2 or in Case 3 a translocation attempt is simply rearranged by extending the right factor (Case 2) or the left factor (Case 3). As observed above only one translocation attempt in $\Gamma^{(i)}$ is in Case 1 and the \textsf{while} cycle of line 9 can add at most $m-i$ new translocation attempts to $\Gamma^{(i+1)}$. In the worst case each translocation attempt added to $\Gamma^{(i+1)}$ will be closed only at iteration $m$, thus it will be extended along the sets $\Gamma^{(j)}$, for $j>i$. Thus the overall contribute of each translocation attempt added to $\Gamma^{i+1}$ is $m-i$.

Thus the total number of translocation attempts processed during a single iteration of the main loop of the algorithm is bounded by
$$
	\begin{array}{rl}
	\displaystyle\sum_{i=0}^{m}|\Gamma^{(i)}| & \leq  \displaystyle1+\sum_{i=1}^{m} (m-i)^2 =\\[0.5cm]
		&\displaystyle = 1+\sum_{i=1}^{m} m^2 - \sum_{i=1}^{m} 2im +\sum_{i=1}^{m} i^2=\\[0.5cm]
	 	&\displaystyle = 1+m^3 + \frac{m(m+1)(2m+1)}{6} - \frac{m(m+1)}{2} = \\[0.5cm]
		&\displaystyle = \frac{1}{3}m^3 - \frac{1}{2}m^2 + \frac{1}{3} \leq m^3 
	\end{array}
$$

\smallskip

Finally we observe that each translocation attempt in Case 2 and Case 3.a is processed in constant time, during the execution of Algorithm \ref{fig:code3}.  
A translocation attempt in Case 1 my be processed in $O(m-i)$ worst case time. However the overall contribution of the while cycle at line 9 is at most $O(m^2)$ since, as observer above, there is a single translocation attempt in Case 1 for each $\Gamma^{(i)}$.

For a translocation attempt in Case 3, observe that at each execution of line 22 the value of $b$ is increased of at most $1$. Then in line 23 we decrease $k_2$ by $b$. Since the value of $k_2$ is increased only in line 15, this implies that overall number of times the \textsf{while} cycle of line 21 is executed is bounded by $k_2$, which is at most $m$. Thus the overall contribution given by the while cycle of line 19 is $O(m^3)$.

We can conclude that the overall time complexity of a single iteration of Algorithm \ref{fig:code3} is $\bigO(m^3)$ and that the overall complexity of the algorithm is therefore $\bigO(nm^3)$.

\subsection{Solving other variants of the problem}
Algorithm 3, as proposed in the previous section, solves variant (c) of the approximate string matching problem allowing for non overlapping unbalanced translocations. Thus it finds all positions $s$ in $y$ such that $x$ has a $\delta$-bounded approximate occurrence in $y$ at position $s$.
In this section we briefly discuss how to slightly modify the algorithm to obtain a solution for variants (a), (b) and (d) respectively.

For solving \textbf{Variant (a)} of the problem a translocation attempt at a given position of $y$ can be simply represented as a quadruple of indexes $(s_1,k_1,s_2,k_2)$. Thus we don't need to maintain the cost $t$ of the attempt in terms of number of translocations. The resulting algorithm could be simplified accordingly, although this simplification doesn't lead to a reduction in terms of time or space complexity.

\textbf{Variant (b)} of the problem asks to find the number of all $\delta$-bounded approximate occurrences of $x$ in $y$.  In this case it is enough to modify Algorithm 3 in order to count the matching positions while they are given in line 34.  Thus the solution maintains the same time and space complexity.

Finally, \textbf{Variant (d)} of the problem asks to find, for each position $s$ in $y$, the number of distinct $\delta$-bounded approximate occurrences of $x$ in $y$ at position $s$. As i the previous case it is enough to maintain a counter which must be increased, in line 34, by the size of the set $\Gamma^{m}$. Also in this case the solution maintains the same time and space complexity as Algorithms 3.


\section{Conclusions and Future Works}\label{sec:conclusions}
We presented a three solutions for the approximate string matching problem allowing for unbalanced translocations of adjacent factors working in $\bigO(nm^3)$ worst case time using $\bigO(m^2)$-space.
To the best of our knowledge this is the first paper addressing this kind of problem.

Although our three solutions have the same time and space complexity Algorithm 2 and Algorithm 3 have some special features.
Specifically we prove that Algorithm 2 has a $\bigO(n\log_{\sigma}^2m)$ average case time complexity, for alphabets of size $\sigma\geq 4$.
In addition Algorithm 3 uses a constructive approach which could be more efficient in practice  and which can be easily optimized.  It turns out, indeed, by our preliminary experimental results that our solution has a sub-quadratic behaviour in practical cases. This suggests us to focus our future works on an accurate analysis of the algorithm's complexity in the average case.


\end{document}